\newtheorem{theorem}{Theorem}
\newcommand{\tabincell}[2]{\begin{tabular}{@{}#1@{}}#2\end{tabular}}
\begin{document}
	
\title{SecGrid: A Secure and Efficient SGX-enabled Smart Grid System with Rich Functionalities}

\author{Shaohua~Li,
        Kaiping~Xue,~\IEEEmembership{Senior~Member,~IEEE}
\IEEEcompsocitemizethanks{\IEEEcompsocthanksitem S. Li and K. Xue are with the department of Electronic Engineering and Information Science, University of Science and Technology of China, Hefei, Anhui,
	China, 230027 (Email: kpxue@ustc.edu.cn (K. Xue)).}
\thanks{}
}

\IEEEtitleabstractindextext{%
\begin{abstract}
Smart grid adopts two-way communication and rich functionalities to gain a positive impact on the sustainability and efficiency of power usage, but on the other hand, also poses serious challenges to customers' privacy. Existing solutions in smart grid usually use cryptographic tools, such as homomorphic encryption, to protect individual privacy, which, however, can only support  limited and simple functionalities. Moreover, the resource-constrained smart meters need to perform heavy asymmetric cryptography in these solutions, which is not applied to smart grid. In this paper, we present a practical and secure SGX-enabled smart grid system, named SecGrid. Our system leverage trusted hardware SGX to ensure that grid utilities can efficiently execute rich functionalities on customers' private data, while guaranteeing their privacy. With the designed security protocols, the SecGrid only require the smart meters to perform AES encryption. Security analysis shows that SecGrid can thwart various attacks from malicious adversaries. Experimental results show that SecGrid is much faster than the existing privacy-preserving schemes in smart grid.
\end{abstract}

\begin{IEEEkeywords}
Smart Grid, Intel SGX, Data Aggregation, Dynamic Pricing, Load Forecasting, Security, Privacy.
\end{IEEEkeywords}}

\maketitle

\IEEEdisplaynontitleabstractindextext

\IEEEpeerreviewmaketitle

\IEEEraisesectionheading{\section{Introduction}\label{par:s1-introduction}}
\IEEEPARstart{S}{mart} grid integrates various information and communication technologies to achieve efficient and reliable power generation, transmission, distribution, and control \cite{erol2015energy,Review-2016-KMSKK,he2017win}. Each house will be equipped with a smart meter, which collects customers' interval data (typically minute-level or second-level power usage profile) for billing or analyzing purpose. On the one hand, these fine-grained data are used to enable real time analysis, such as dynamic pricing \cite{LLLLS-UDP-2014,Optimal-2010-RHG} and load forecasting\cite{ANALYSIS-1989-IS,raza2015review}. On the other hand, this information raises privacy concerns because it reveals important personal information and can lead to various cyberattacks \cite{mishra2015rate,liu2016leveraging}. For example, attackers can derive the appliance usage patterns of the householders from fine-grained energy usage profile \cite{Achieve-2014-zhao}.

To prevent customers' fine-grained data from disclosure, secure data aggregation schemes  \cite{LXYH-PPMA-2017,rahman2017secure,chim2015prga} have been proposed to aggregate overall power usage data. In these schemes, each smart meter encrypts data using homomorphic cryptography, such as Paillier \cite{paillier1999public} and BGN \cite{boneh2005evaluating}, then reports the ciphertext to a gateway. The gateway will compute the aggregation result on ciphertext and then report the result to control center for further analysis. Data aggregation guarantees that only overall power usage data will be known by others, thereby protecting customers' privacy. However, some important tools, such as dynamic pricing and load forecasting, which can be used to ensure grid system's stability and reliability, require the grid utilities to compute on customers' fine-grained data. Dynamic pricing is used to charge customers with dynamic prices based on their real time usage. To realize it in a privacy-preserving way, the existing schemes, like the one in \cite{LLLLS-UDP-2014}, use homomorphic encryption and sophisticated design. As for load forecasting, so far, no one has designed a practical privacy-preserving scheme for it due to the need of complex operations.

Although utilizing homomorphic encryption can realize data aggregation and dynamic pricing in a privacy-preserving way, it brings a heavy computation overhead to the grid utilities, especially for resource-constrained smart meters \cite{victoria2014guidelines,he2017cyber}. Furthermore, many useful tools, like load forecasting, are less likely to be implemented efficiently with privacy protection in the same way. In addition, if we want to realize multiple tools in one system, the computation overhead will be even higher. We refer to these tools (i.e., data aggregation, dynamic pricing, load forecasting, etc.) as functionalities.

In general, it is a trade-off between rich functionalities and strong privacy protection, as it is very hard to achieve both of the features simultaneously. However, in this paper, we accomplish both by our novel design. We present SecGrid, a secure and efficient smart grid system that possesses the properties of privacy preservation and rich functionalities. Our security model considers malicious adversaries who may control the software and even the OS of the whole grid utilities (including gateways and control center) except for the certified physical processors involved in the computation. In SecGrid, the resource-constrained smart meters only need to perform AES encryption, and the gateways can perform rich functionalities with high efficiency in a privacy-preserving way. In fact, our system can be treated as a framework, as any functionality that can be implemented obliviously is compatible with SecGrid.

Our main contribution is the design, implementation, and evaluation of this practical smart grid system. We use SGX processor, which is Intel's trusted hardware capability \cite{costan2016intel}, as a building block. Indeed, SGX does not guarantee to secure everything, and we need to cope with many challenges not addressed by the hardware. The first is to establish a secret key between a smart meter and a gateway. Since smart meter is resource-constrained, it cannot perform heavy cryptographic schemes, such as Diffie-Hellman key exchange and asymmetric encryption \cite{tan2017survey,ReadMeter-2017-SAW}. To solve this problem, a user device is introduced to participate in the initialization phase of a smart meter, which can only interact once with smart meter to complete the key exchange.

The second challenge is to guarantee data integrity for the smart meters' reports. Since SGX has no non-volatile storage, the customers' reports that need to be stored in the storage of gateway, may be tampered, removed or rolled back by a malicious software or compromised OS \cite{sangho2017inferring,T-SGX-2017-SLKP}. The existing solutions guaranteeing integrity in such case either bring heavy overhead or are not suitable for smart grid architecture \cite{ZeroTrace-2017-SGF,ROTE-2017-MAKDSGJC}. We thus propose a lightweight integrity guaranteed method for SecGrid. This method is inspired by \textit{count increment} technique of literature \cite{Ariadne-2016-RF}. During the processing of gateway, every report will be encrypted together with unique \textit{count} and \textit{nonce}, and the monotonicity and freshness of which will be verified in our proposed periodic report protocol.

The next challenge is to protect the data inside the isolated memory regions from attacks due to unsafe memory accesses. SGX provides the isolated memory regions for the programs, and thus unsafe implementation of programs can easily leak data or suffer from other attacks. By ``unsafe'' here means that the implemented codes may have memory access patterns or control flows that depend on the values of sensitive data. We thus provide the safe implementations of three functionalities, namely, data aggregation, dynamic pricing, and load forecasting, to show how the functionalities can be securely supported in SecGrid. Other challenges, such as time synchronization, gateway restart protection, etc., are also solved in our system.
In summary we make the following contributions:

\begin{itemize}[leftmargin=*]
	\item We present SecGrid, a practical smart grid system supporting rich functionalities while guaranteeing customers' privacy. In our design, the smart meters only need to perform AES encryption instead of heavy cryptography.
	
	\item Our system is compatible with any functions that can be implemented obliviously in smart grid. To better present our system, we implement three commonly used functions, namely data aggregation, dynamic pricing, and load forecasting, with strong data obliviousness.
	
	\item The security analysis indicates that our design is secure against malicious adversaries. Also, the experimental results show that the proposed protocols can be completed efficiently, and the runtime of three functions has around $10^3\times$ improvement compared with existing solutions.
\end{itemize}

The rest of the paper is organized as follows: Section \ref{par:s2-relatedwork} enumerates the related works of Intel SGX and rich functionalities in smart grid. Then we present some preliminaries of our system in Section \ref{par:s4-buildingblocks}. In Section \ref{par:s3-systemsecuritymodel}, we introduce our system model and security model. We illustrate protocols in detail about initialization, periodic report and gateways restart in Section \ref{par:s6-register}, followed by implementation of functions in Section \ref{par:s7-functions}. In Section \ref{par:s11-security} and \ref{par:s12-performance}, we analyze the security of our design and evaluate the performance respectively. Finally, Section \ref{par:s13-conclusion} makes a conclusion.

\section{Related Work}\label{par:s2-relatedwork}
\subsection{Functionalities for Smart Grid}
To enable rich functionalities in smart grid, customers' consumption data need to be collected for analysis. However, customers' privacy may be leaked out unconsciously during the procedures.
To guarantee privacy and also enable functions carried out in smart grid at the same time, many cryptography-driven schemes have been proposed. One popular privacy-preserving mechanism is secure data aggregation \cite{LXYH-PPMA-2017,rahman2017secure,chim2015prga}, which aggregates customers' consumption data of a specific region through homomorphic encryption. Another widely researched topic is dynamic pricing \cite{tushar2017price,ye2016real,misra2015d2p}. However, privacy protection schemes, like the one in \cite{LLLLS-UDP-2014}, can only handle simple pricing models with cryptography tools. Some real time pricing models \cite{Optimal-2010-RHG} are not likely to realize efficiently in a privacy preserving way. Other famous functions, such as load forecasting \cite{ahmad2017accurate,chan2012load,ANALYSIS-1989-IS}, that are very useful to improve the grid's performance, also suffer the same problem.


\subsection{Intel SGX}\label{par:s2a-securesgx}
Intel SGX provides \textit{isolated} execution spaces, named enclaves. Programs in enclaves can process data in plaintext. But any software or even the OS on the same platform, cannot observe the data content inside enclaves \cite{MPML-2016-OSFMNVC,VC3-2015-SCFGPMR}. 
Ohrimenko et al. \cite{MPML-2016-OSFMNVC} showed how to outsource model training to untrusted servers. To make memory accesses data-independent, which is not protected by SGX, their system uses padding and other tricks to hide access pattern. 
VC3\cite{VC3-2015-SCFGPMR} implemented MapReduce in distributed servers with confidentiality and verifiability.  Opaque \cite{Opaque-2017-ZDBPGS} is an encrypted data analytics platform over Spark SQL, which uses oblivious sorting for data processing in an encrypted database. 
Iron \cite{Iron-2016-FVBG} implements some interesting but heavy primitives in cryptography. The SGX implementations are efficient and practical. The access pattern, not protected by SGX, is hidden by oblivious comparison functions. 
Town Crier \cite{TownCrier-2016-ZCCJS} is a work that provides authenticated data feed from external trusted sources for smart contracts.
ZeroTrace \cite{ZeroTrace-2017-SGF} provides an oblivious data storage system from SGX to access external storage, which minimizes the response time.


\section{Preliminaries}\label{par:s4-buildingblocks}
\subsection{Enclaves in SGX}\label{par:s3a-hardware}
SGX refers to Intel Software Guard Extension, a set of CPU extensions, which can provide isolated execution environments, named \textbf{enclaves}, to protect the confidentiality and integrity of the data against all other software, even a compromised OS, on the platform. When a platform is equipped with a SGX-enabled CPU (such as the gateways and control center in our system), in addition to the enclave, the memory, BIOS, I/O and even power are treated as potentially untrusted. The general processing flow in enclave is shown in Fig. \ref{fig:enclave}. The encrypted data firstly will be transmitted into enclave for decryption. Then the decrypted data will be the input of some function $f$. Finally, the output of $f$ will be encrypted and then sent to the outside of the enclave.

\begin{figure}[!h]
	\centering
	\includegraphics[width=\linewidth]{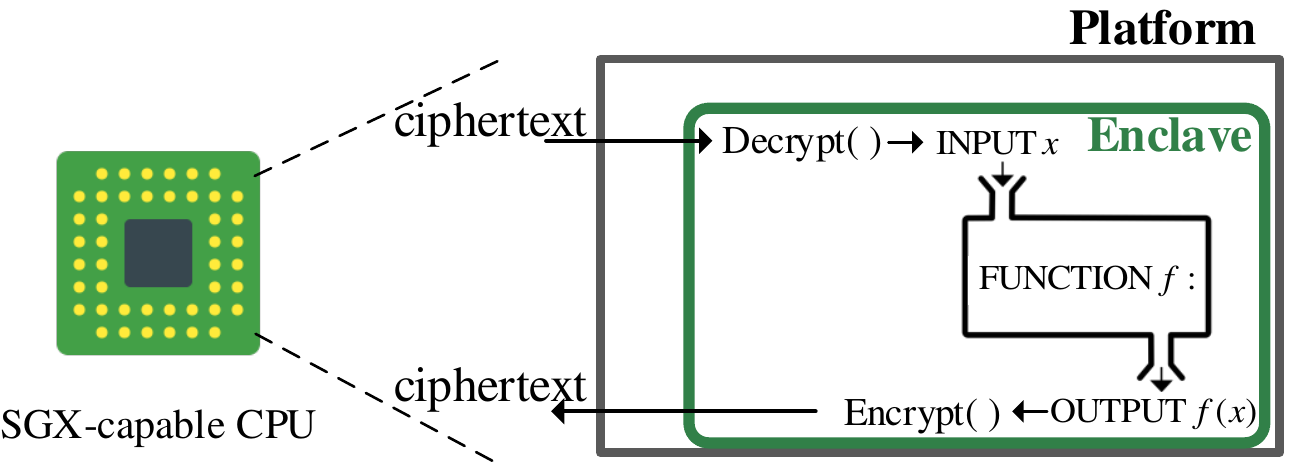}
	\caption{General processing flow in enclave\label{fig:enclave}}
	\vspace{-1px}
\end{figure}

SGX provides two core operations, \textit{sealing} and \textit{remote attestation}, which will be used in our system. \textit{Sealing} is for storing data securely outside of the enclave. \textit{Remote attestation} is for a remote party to verify the legitimacy of the enclave (i.e., the enclave is created by a legal SGX-capable CPU and the code is correctly loaded in the enclave). The details are as follows:
\begin{itemize}
\item \textit{Sealing.} Each SGX-capable CPU has a hardware-protected sealed key called \textit{Root Seal Key} that cannot be stolen or forged. An enclave can derive a \textit{Seal Key} from the \textit{Root Seal Key} using instruction {\small{\texttt{EGETKEY}}}. This key is specific to the enclave, and other enclaves cannot derive the same key. But the same enclave can always get this key even if it is destroyed and restarted. \textit{Seal Key} is used to encrypt and authenticate data stored outside of the enclave.

\item \textit{Remote attestation.} SGX allows a remote party to check whether the code is correctly loaded in an enclave. When an enclave is created, the CPU will generate a hash of the state of the loaded code and static data, known as \textit{measurement}, and a \textit{report} that contains the \textit{measurement} and optional self-defined data (e.g. a new generated public key). Then the software that created the enclave can ask for a \textit{quote}, which consists of a \textit{report} and its signature signed with a hardware-protected attestation key. Remote parties can verify the \textit{quote} by contacting the Intel Attestation Server. Such procedure is known as \textbf{\textit{remote attestation}}. The detailed operations can be found in \cite{costan2016intel}.
\end{itemize}

\subsection{Attacks against Enclave}
The protection of SGX is restricted in CPU. Although data are encrypted, the memory access patterns may leak the privacy of data inside enclave \cite{ZeroTrace-2017-SGF}. Branches in program like \texttt{if-else} make data-dependent running patterns, and enclaves that are running such programs suffer from cache-timing attack \cite{VC3-2015-SCFGPMR,MPML-2016-OSFMNVC}. Access to storage outside enclave exposes address to a PCI-e bus listener or an operating system (OS), who can create page faults. Furthermore, if the address is data-dependent, there will be page-fault attacks \cite{ZeroTrace-2017-SGF,T-SGX-2017-SLKP}.

Also, rollback attack breaks the freshness of external data \cite{ROTE-2017-MAKDSGJC}. Some data that requires a long-term preservation should reside in persistent storage, like disk. These data can be rolled back to a previous version by a compromised OS. For example, replacing the reported power usage data at time slot $t$ with $t-1$. In preventing rollback attack, 
integrity guarantee from Merkle tree \cite{ZeroTrace-2017-SGF}  fails when the platform is restarted, while other methods are either too heavy or unable to be applied to smart grid architecture \cite{ROTE-2017-MAKDSGJC,VC3-2015-SCFGPMR}. Therefore, we need to develop new approach in our design to prevent such attack.


\section{System Model and Security Model}\label{par:s3-systemsecuritymodel}
\subsection{System Model}\label{par:s3a-systemmodel}
\begin{figure}[t]
	\centering
	\includegraphics[width=\linewidth]{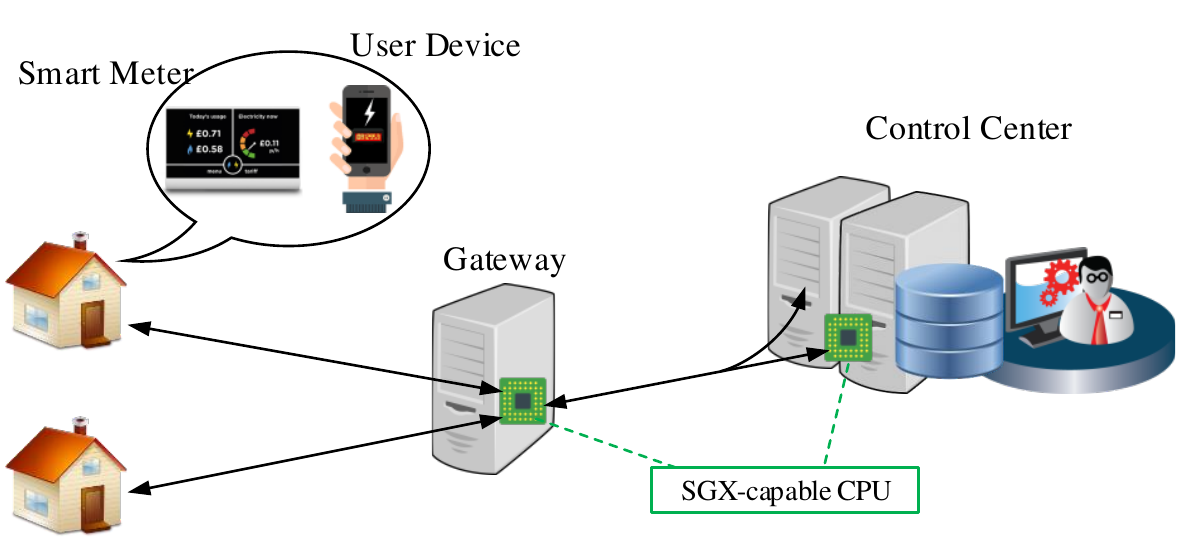}
	\caption{System Architecture}
	\label{figure:archi}
\end{figure}

Our system adopts the typical architecture of smart grid, which is shown in Fig. \ref{figure:archi}. It contains a control center, gateways in residential area, smart meters and user devices in home area.
Control center and gateways are able to create enclaves, called the \textit{control enclave} and the \textit{gateway enclave}, respectively.

\smallskip
\noindent\textbf{Control Center (CC).} 
CC collects data and responds to requests from/to each gateway. It has a enclave, called \textit{control enclave}. CC also has all the initialization keys of smart meters ($\mathcal{K}_\mathsf{init}^\mathtt{i}$ for each smart meter $i$). These keys are used in the initialization phase of smart meters, and can be accessed by the \textit{control enclave}. To prevent the CC or other potential attackers from knowing the smart meter's key, the \textit{control enclave} takes Merkle tree based techniques to access these keys.

\smallskip
\noindent\textbf{Gateways (GW).} A GW runs a secure enclave, called \textit{gateway enclave}, which directly collects and processes the data reported by smart meters. Many functionalities, such as data aggregation, dynamic pricing, and load forecasting, can be performed inside the \textit{gateway enclave}. The \textit{gateway enclave} can establish shared keys with smart meters with the help of user device.

\smallskip
\noindent\textbf{Smart Meters (SM).} Every house is equipped with a SM to collect the power usage data and report the encrypted data to \textit{gateway enclave} periodically, e.g., every 15 minutes. Considering the constrained resource of smart meter, the only cipher used here is AES-GCM used to guarantee both confidentiality and integrity of data. Each SM contains an initialization key $\mathcal{K}_\mathsf{init}^\mathtt{i}$ that is used to establish a new secret key $\mathcal{K}_\mathtt{i}$ between the SM and the \textit{gateway enclave}.

\smallskip
\noindent\textbf{User Devices (UD).} A user needs a device to help his/her SM establish a secret key with \textit{gateway enclave} in the initialization phase. The device can be a smartphone that is installed with an official application so that it can participate in the initialization of a new SM. UD has sufficient computing capability to run asymmetric cryptography algorithms and verify remote attestation. 

\subsection{Security Model}
We assume that a malicious adversary who can control all the software, including the OS, in the CC and GW, tries to violate the \textit{confidentiality} and \textit{integrity} of customers' private data by performing the following attacks. The adversary can read, block, modify, and replay all messages sent by/to a secure enclave. The adversary is also able to observe memory access pattern and infer control flow in an enclave process, i.e., launching side-channel attacks. In particular, the adversary may perform rollback attack, that is, to replace the sealed data with a previous version.

We assume that the adversary cannot compromise the secure enclaves and the relevant enclave keys (e.g. $Seal Key$ and attestation key).
The adversary cannot break cryptographic primitives used in our system, i.e AES-GCM, Diffie-Hellman key exchange, etc. Compromising the user device, denial-of-service and physical attacks, such as power analysis, are out of the scope of this paper. 

The adversary is assumed to not want to trigger alarm. Although the adversary can perform various attacks, he/she does not want to be detected by the smart grid system. 

\section{System Design}\label{par:s6-register}

\subsection{Overview}
Our goal is to guarantee customers' privacy while enabling rich functionalities in smart grid. Rich functionalities refer to various demand side management functionalities that process customers' private data to improve the grid's performance. These rich functionalities are hard to realized efficiently by cryptography-based schemes.

In SecGrid, a GW runs rich functionalities inside the \textit{gateway enclave}. To prevent side-channel attacks, these functionalities should not contain data-dependent operations, that is, they need to be implemented obliviously. We discuss the possible leakage that may be introduced by three popular functionalities, namely, data aggregation, dynamic pricing, and load forecasting, and provide the secure implementations, which will be described in Section \ref{par:s7-functions}.

To guarantee the confidentiality and integrity of input and output data of rich functionalities, in SecGrid, we develop a new periodic report protocol to secure the transmission and storage of customers' private data. Each SM's reported data is encrypted using AES-GCM, and contains two new parameters, $\mathtt{nonce}$ and $\mathtt{ctr}$. These parameters are carefully used to resist various attacks, such as replay attack and rollback attack, which, however, require more complex measures to prevent in other solutions. The secret key used in this protocol is established by our proposed initialization protocols for SMs and GWs. Considering the robustness of our system and preventing the adversaries from restarting the \textit{gateway enclave}, we propose a status restoring protocol for GW, which can avoid data loss due to the restart of GW or \textit{gateway enclave}.

\subsection{CC/GW Initialization}\label{par:systeminit}

This initialization phase involves the \textit{control enclave} and \textit{gateway enclave}. The \textit{control enclave} will be initialized at the system setup, and when the \textit{gateway enclave} starts to work, it will interact with the \textit{control enclave} to authenticate each other as well as share symmetric/asymmetric keys and time information.

The initialization protocol is shown in Fig. \ref{fig:system_init}. These six steps can be divided into two stages: \textbf{Attest and Key Exchange} and \textbf{Time Sync}. During the first stage, the \textit{gateway enclave} and the \textit{control enclave} attest each other through \textit{remote attestation}, and then use Diffie-Hellman key exchange to share a secret key. In the second stage, the \textit{control enclave} synchronizes its time to the \textit{gateway enclave} as well as confirms the shared keys. The details are shown as follows.

\smallskip
\noindent\textbf{Stage 1 [Attest and Key Exchange]}
\begin{enumerate}[label=\large{\textcircled{\small{\arabic*}}}, leftmargin=*]
	\item The \textit{gateway enclave} generates a public/private key pair ($\mathcal{PK}_\mathsf{gw}$, $\mathcal{SK}_\mathsf{gw}$) for a CCA2-secure public key cryptosystem.
	
	\item The \textit{gateway enclave} sends its \textit{remote attestation} message to the \textit{control enclave}, which contains $\mathcal{PK}_\mathsf{gw}$ and a generated Diffie-Hellman parameter $g^a$.
	
	\item Upon receiving the \textit{remote attestation}, the \textit{control enclave} verifies its legitimacy (the detailed verification phase of \textit{remote attestation} is described in Section \ref{par:s3a-hardware}). Then, it generates another Diffie-Hellman parameter $g^b$, and sends back its \textit{remote attestation} message containing $\mathcal{PK}_\mathsf{cc}$, $g^b$ and the current wall-clock time.
\end{enumerate}

\smallskip
\noindent\textbf{Stage 2 [Time Sync. and Confirm ]}
\begin{enumerate}[label=\large{\textcircled{\small{\arabic*}}}, start=4, leftmargin=*]
	\item Once the message is received, the \textit{gateway enclave} records the received time as reference time and starts the time counter from 0. After verifying the \textit{remote attestation}, the \textit{gateway enclave} obtains the time: the reference time plus the value of the time counter. Then, it encrypts the time with $g^{ab}$ and signs the ciphertext with $\mathcal{SK}_\mathsf{gw}$. The \textit{gateway enclave} sends the ciphertext and the signature to the \textit{control enclave}.
	
	\item The \textit{control enclave} verifies the signature with $\mathcal{PK}_\mathsf{gw}$, decrypts the ciphertext with $g^{ab}$, and then compares the time with local time. If all the verifications succeed, the \textit{control enclave} will seal $g^{ab}$, and return an \textit{ack} message, which is encrypted using $g^{ab}$ and signed using $\mathcal{SK}_\mathsf{cc}$.
	
	\item Upon receiving the \textit{ack}, the \textit{gateway enclave} verifies the signature using $\mathcal{PK}_\mathsf{cc}$ and decrypts the ciphertext using $g^{ab}$. Then, it seals $g^{ab}$ and $\mathcal{PK}_\mathsf{cc}$.
	
	\begin{figure}[ht]
		\centering
		\includegraphics[width=0.5\textwidth]{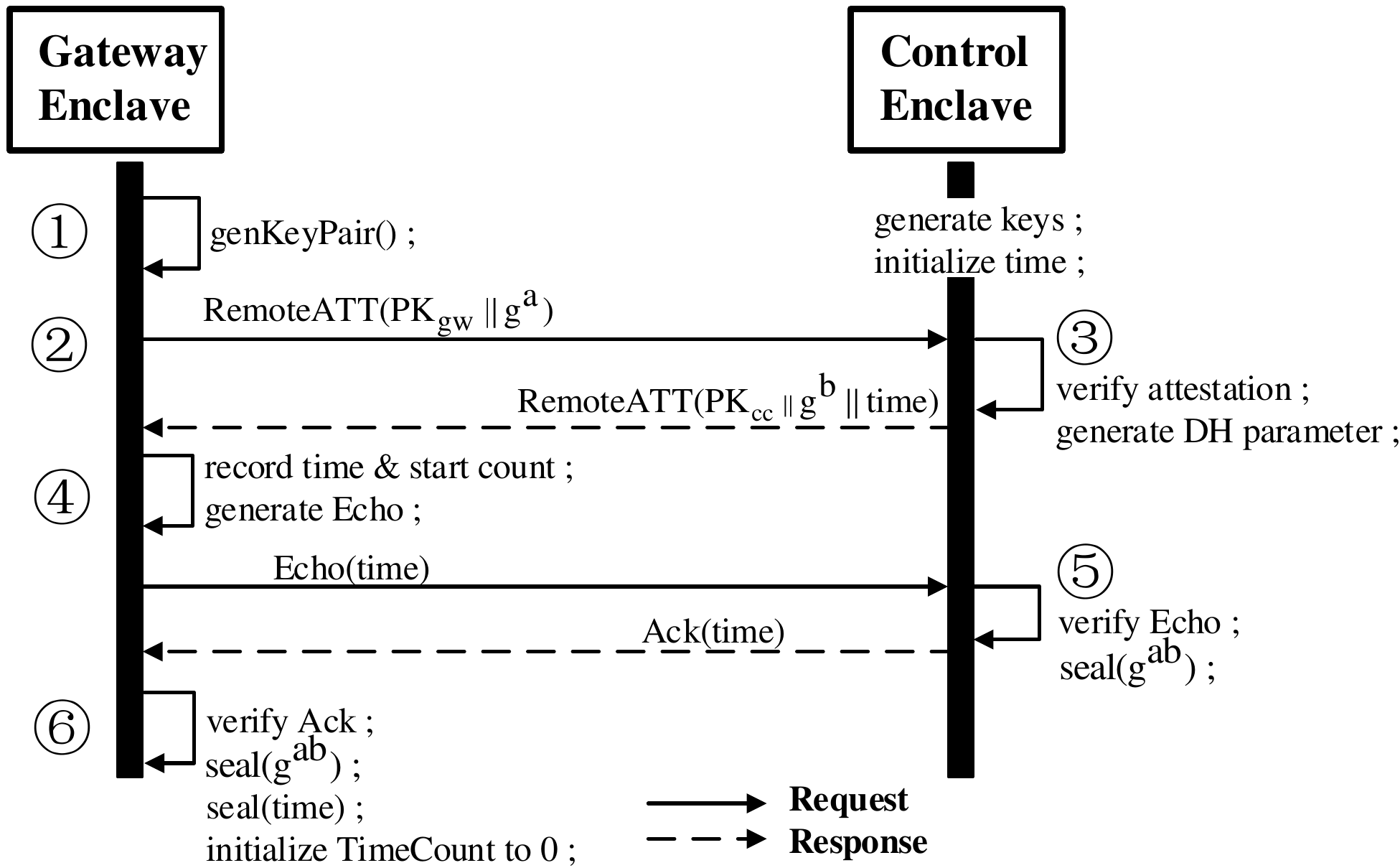}
		\caption{CC/GW initialization protocol.}
		\label{fig:system_init}
	\end{figure}
	
\end{enumerate}

\subsection{SM Initialization Protocol}\label{SM_init}
The first time when a customer accesses the smart grid, he/she first installs an official applications on UD to initialize his/her SM.
The SM initialization protocol is as shown in Fig. \ref{fig:SM_init}. This protocol can be triggered by the UD or the customer manually. We take the initialization of the \textit{i}-th SM as an example in the following description. 
The initialization phase can be divided into three stages: \textbf{Attestation}, \textbf{Key Establishment}, and \textbf{Confirm}. The UD will verify the legitimacy of the \textit{gateway enclave} according to its \textit{remote attestation}. The second stage is used to establish a secret key between the SM and the \textit{gateway enclave}, where the UD is used as a bridge. The SM has a sealed initialization key $\mathcal{K}_\mathsf{init}^\mathtt{i}$, which is also known by the \textit{control enclave}. To enable the secure use of $\mathcal{K}_\mathsf{init}^\mathtt{i}$, we utilize Merkle tree based method, proposed in ZeroTrace \cite{ZeroTrace-2017-SGF}, to guarantee integrity and freshness. Note that, after the following procedures, a fresh $\mathtt{nonce}$ will be securely obtained by the SM, and it will be used as one of the parameters in the first report of the SM.

One significant problem is how the SM sends initialization message to the UD. We can utilize existing solutions such as ZigBee based communication protocol (used in OG\&E company \cite{ReadMeter-2017-SAW}) to achieve this purpose.

\begin{figure}[t]
	\centering
	\includegraphics[width=0.45\textwidth]{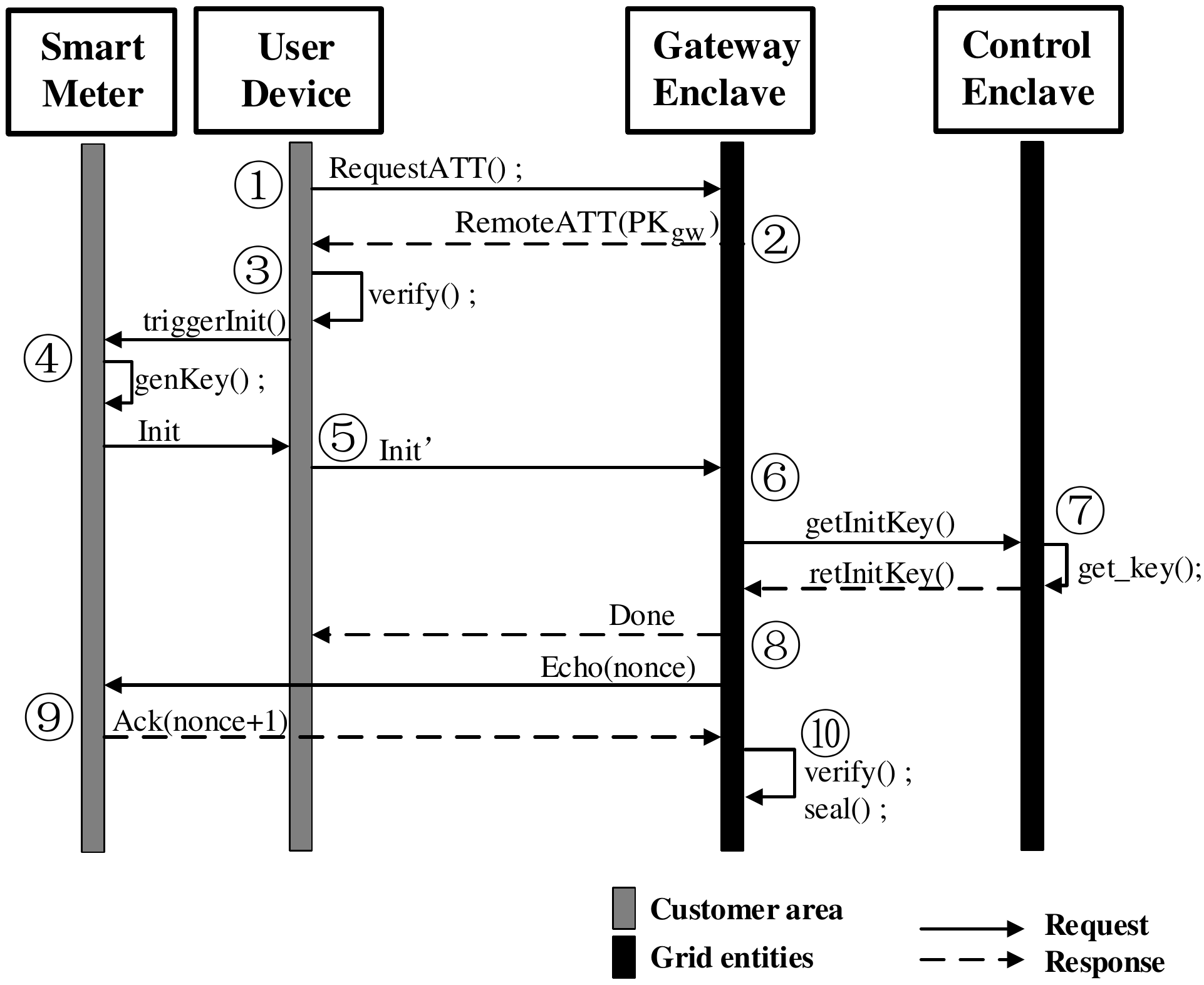}
	\caption{The SM initialization protocol.}
	\label{fig:SM_init}
\end{figure}

\smallskip
\noindent\textbf{Stage 1 [UD$\leftarrow$GE, Attestation]}
\begin{enumerate}[label=\large{\textcircled{\small{\arabic*}}}, leftmargin=*]
	\item UD starts and requests a remote attestation from the \textit{gateway enclave}.
	
	\item The \textit{gateway enclave} returns its \textit{remote attestation} message, which contains $\mathcal{PK}_\mathsf{gw}$, to the UD.
	
	\item Upon receiving the \textit{remote attestation}, the UD verifies it, and then triggers the initialization phase of the SM.
\end{enumerate}

\smallskip
\noindent\textbf{Stage 2 [SM$\leftrightarrow$GE, Key Establishment]}
\begin{enumerate}[label=\large{\textcircled{\small{\arabic*}}}, start=4, leftmargin=*]
	\item The SM starts the initialization phase. It first generates a new random key $\mathcal{K}_\mathtt{i}$, and then encrypts this key and its identifier $\mathtt{ID_i}$ with sealed initialization key $\mathcal{K}_\mathsf{init}^\mathtt{i}$. After that, the SM sends the $\mathtt{Init}$ message to the UD, where $\mathtt{Init}=\mathtt{ID_i}||E_{\mathcal{K}_\mathsf{init}^\mathtt{i}}(\mathtt{ID_i}||\mathcal{K}_\mathtt{i})$.\\
	(To be noted, the $\mathtt{Init}$ message does not need to be protected from eavesdropping since it has been encrypted. We will analyze this in detail in Section \ref{security_init}).
	
	\item The UD encrypts the $\mathtt{ID_i}$ in $\mathtt{Init}$ message using $\mathcal{PK}_\mathsf{gw}$ to generate new $\mathtt{Init'}$ message, and sends it to the \textit{gateway enclave}: $\mathtt{Init}'=\hat{E}_{\mathcal{PK}_\mathsf{gw}}(\mathtt{Init})$.
	
	\item The \textit{gateway enclave} extracts the $\mathtt{ID_i}$ with $\mathcal{SK}_\mathsf{gw}$, generates a \textit{getInitKey()} message containing the $\mathtt{ID_i}$ encrypted with $g^{ab}$, and then sends it to the \textit{control enclave}.
	
	\item Upon receiving the message, the \textit{control enclave} decrypts it and extracts the $\mathtt{ID_i}$. Then, the \textit{control enclave} obtains the corresponding $\mathcal{K}_\mathsf{init}^\mathtt{i}$, void this key and then returns it after encryption.
	
	\item The \textit{gateway enclave} can decrypt the $\mathcal{K}_\mathsf{init}^\mathtt{i}$, and obtain the $\mathcal{K}_\mathtt{i}$ from $\mathtt{Init}'$ with it. The \textit{gateway enclave} returns a \textit{Done} message to the UD to notify the initialization has succeeded, and an \textit{Echo} message to the SM, which contains an encrypted $\mathtt{nonce}$ using $\mathcal{K}_\mathtt{i}$.
\end{enumerate}

\smallskip
\noindent\textbf{Stage 3 [SM$\leftrightarrow$GE, Confirm]}
\begin{enumerate}[label=\large{\textcircled{\small{\arabic*}}}, start=9, leftmargin=*]
	\item The SM decrypts the $\mathtt{Echo}$ message, and sets the local time to \textit{time}. Then, the SM returns an $\mathtt{Ack}$ message to the \textit{gateway enclave}. This message contains the encrypted $\mathtt{nonce}+1$.
	
	\item The \textit{gateway enclave} verifies the $\mathtt{Ack}$ message, and then seals the $\mathcal{K}_\mathtt{i}$ with $\mathtt{ID_i}$ as associated data in AES-GCM.
\end{enumerate}

\subsection{Periodic Report Protocol}\label{period_report}
When the initialization phase is done, the SM shares a symmetric key with the \textit{gateway enclave}, which will be used to secure the report data. In order to ensure the data integrity and prevent replay attack, we enable an monotonic counter $\mathtt{ctr}$, which starts from 0, in SM to indicate different reports. Due to the continuity of the SM's reports, our use of $\mathtt{ctr}$ can resist rollback attack, which will be proved in our security analysis. Another parameter $\mathtt{nonce}$ is also used here to guarantee the freshness of reports. Considering that the freshness verification needs the \textit{gateway enclave} to store every $\mathtt{nonce}$, which is difficult since SGX has no permanent storage, we make clever use of cyclical nature of the reports. By letting the \textit{gateway enclave} randomly choose the $\mathtt{nonce}$ for the SM to use in the next report, we avoid the storage of each $\mathtt{nonce}$. The protocol proceeds as follows:

\begin{enumerate}[label=\large{\textcircled{\small{\arabic*}}}, start=1, leftmargin=*]
	\item When the \textit{i}-th SM needs to report, it first increases the counter $\mathtt{ctr_i}=\mathtt{ctr_i}+1$, and then generate the report $\mathtt{r_i}=\mathtt{ID_i}||E_{\mathcal{K}_\mathtt{i}}(\mathtt{ID_i}||\mathtt{m_i}||\mathtt{nonce}||\mathtt{ctr_i})$, where $\mathtt{nonce}$ is sent by the \textit{gateway enclave} during last report period. The SM reports $\mathtt{r_i}$.
	
	\item Upon receiving the report, the \textit{gateway enclave} decrypts the report with its sealed key $\mathcal{K}_\mathtt{i}$, and extracts data $\mathtt{m_i}$, $\mathtt{nonce}$, and $\mathtt{ctr_i}$. Next, the \textit{gateway enclave} verifies the correctness of $\mathtt{nonce}$ and obtains last counter $\mathtt{ctr^{old}_i}$ from storage, and then checks if $\mathtt{ctr_i}=\mathtt{ctr^{old}_i}+1$. If all passed, the \textit{gateway enclave} will seal $\mathtt{r_i}$ and process $\mathtt{m_i}$ with predefined functions. Otherwise, an error or attack may happen, the \textit{gateway enclave} will report this alarm to CC immediately. Finally, based on the outputs of functions, the \textit{gateway enclave} generates a report for CC and a response for the SM. 
	
	\item The \textit{control enclave} can process these reports in the same way as what the \textit{gateway enclave} does, and generate a response.
\end{enumerate}

\smallskip
One key challenge of above protocol is how to design and program the functions executed inside the enclaves to prevent privacy leakage. As we illustrate before, SGX is not perfectly secure. For example, software-based side channel attacks can violate the data confidentiality even if the data is inside the \textit{gateway enclave}. So we should take fully account of the secure design and implementation of these functions. We will discuss this issue in Section \ref{par:s7-functions}.

\subsection{GW Restart Protocol}\label{GW_restart}

When a \textit{gateway enclave} restarts, it needs to restore its previous state. At restart, the \textit{gateway enclave} may lose new reports that have not been sealed in the last report period. So it needs to requests these reports from the corresponding SMs. Also, it will obtain fresh time from the \textit{control enclave}. The protocol proceeds as follows:

\begin{enumerate}[label=\large{\textcircled{\small{\arabic*}}}, start=1, leftmargin=*]
	\item The \textit{gateway enclave} unseals $\mathcal{K}_\mathtt{i}$, $\mathcal{SK}_\mathsf{gw}$, $\mathcal{PK}_\mathsf{gw}$, $\mathcal{PK}_\mathsf{cc}$, $g^{ab}$ and all latest sealed reports from storage. If the unsealing procedure fails, the alarm will be triggered. Otherwise, the \textit{gateway enclave} sends request to all SMs to ask for report (containing new $\mathtt{nonce}$s for each SM) as well as to the \textit{control enclave} to ask for $\mathtt{time}$.
	\item \begin{enumerate}[label=\textbf{\alph*.}, leftmargin=*]
		\item  Each SM returns its latest report to the \textit{gateway enclave}.
		
		\item The \textit{control enclave} returns current time to the \textit{gateway enclave}.
	\end{enumerate}

	\item Upon receiving responses, the \textit{gateway enclave} verifies the time then sets the local time. Then it checks the freshness of $\mathtt{nonce}$s and if $\mathtt{ctr_i}$ in each report satisfies $\mathtt{ctr_i}=\mathtt{ctr^{old}_i}$ or $\mathtt{ctr_i}=\mathtt{ctr^{old}_i}+1$. If all passed, the \textit{gateway enclave} restores successfully. Otherwise, the restore phase may encounter a problem, and the \textit{gateway enclave} will trigger alarm to notify grid administrators.
\end{enumerate}

\section{Functions}\label{par:s7-functions}
The protocols have secured the data submission from SMs to the \textit{gateway enclave} and the \textit{control enclave}. However, the data processing inside enclaves, i.e. the execution of functions, may have data-dependent operations and is subject to side-channel attacks \cite{MPML-2016-OSFMNVC,Opaque-2017-ZDBPGS,Iron-2016-FVBG}. Programs in enclaves should be oblivious, and therefore we describe the oblivious implementation of data aggregation, dynamic pricing, and load forecasting.
In addition, we present how to support general functions in our system.

\subsection{Data Aggregation}\label{par:s7-usage-aggregate}
Secure data aggregation is used to aggregate overall power usage of all customers over a timespan as follows:
\begin{equation}
\mathsf{PowerUsage}^T_{\mathsf{area}}(t)=\sum_{i\in S}\sum_{\Delta=t}^{t+T-1}\mathsf{PowerUsage}_{i}(\Delta),
\end{equation}
where $S$ is the customer set in this area. 
Compared with cryptography-based schemes \cite{LXYH-PPMA-2017,chim2015prga} that uses Paillier or BGN cryptosystem for homomorphic computation, our system only requires symmetric encrypted data for submission and computation. To make the aggregation oblivious, the \textit{gateway enclave} adds up the power usage data with the same order of the arrival of reports and no additional leakage exists.

\subsection{Dynamic Pricing}\label{par:s9-dynamicpricing}
Typical dynamic pricing models include Time-of-Use (ToU), Critical Peak Pricing (CPP), and Real Time Pricing (RTP) \cite{Review-2016-KMSKK,Optimal-2010-RHG}. We here introduce the secure implementation of them.

\smallskip
\noindent\textbf{ToU} Electricity prices are different at peak time and at off-peak time. Peak time prices are higher than off-peak time for demand control. GW takes the price from a piecewise function:
\begin{equation}
\mathsf{PricePerUnit}(t)=
\begin{cases}
{\mathtt{p}}&\mbox{if $t\in$ off-peak time}\\
{\mathtt{p+\Delta{p}}}&\mbox{if $t\in$ {at-peak time}}
\end{cases}\label{eqn:priceperunit},
\end{equation}
where $t$ is the current time. The dynamic pricing does not have sensitive patterns, because the condition for the piecewise function (\ref{eqn:priceperunit}) is time $t$, which is open. 

But a secure and reliable time \cite{Iron-2016-FVBG,TownCrier-2016-ZCCJS} needs extra efforts, since timestamp in gateway BIOS can be tampered. As mentioned in Section \ref{par:systeminit}, the \textit{gateway enclaves} can obtain the time from the \textit{control enclave}, i.e., gets the time from a trusted source during initialization.

\smallskip
\noindent\textbf{CPP.} Peak time is not fixed. In holidays or the days with special events, it may not be suitable to use ToU pricing model, which is generally for regular days. Therefore, CPP is developed to also handle such case, which can be implemented in a way that is similar to ToU in our system, where days become the condition of specialized piecewise function \cite{Review-2016-KMSKK,LLLLS-UDP-2014}. 

\smallskip
\noindent\textbf{RTP.} 
Real time pricing schemes allow the grid to charge customers with the nearest real time price, i.e., the price at each particular interval of time (e.g. one hour). The price can be announced one hour or a day ahead. We realize a day ahead RTP scheme proposed in paper \cite{Optimal-2010-RHG} in our system, in which the grid utility releases the predicted prices of the next 24 hours.

Let $m_h$ denote the reported power usage at hour $h$, and the pricing function which depends on three parameters $a_h$, $b_h$, $m_0 \geq 0$ be as follows:
\begin{equation}
\mathsf{RealTimePricing}(m_h)=
\begin{cases}
a_h,&\mbox{if $0 \leq m_h < m_0$,}\\
b_h,&\mbox{if $m_h \geq m_0$.}
\end{cases}
\end{equation}
In this scheme, $m_0$ is a fixed value, while $a_h$ and $b_h$ change every hour and every day. In order to allow customers to have sufficient time to schedule their electricity consumption, the GW should predict the prices of the next 24 hours (i.e.  24 $a_h$ and $b_h$) and broadcast the prices to the SMs. Let $\hat{a}[t][h]$ and $\hat{b}[t][h]$ denote the \textit{predicted} parameters for the upcoming price tariff for each hour $h$ on day $t$, the prediction model is formulated as follows:
\begin{equation}
\begin{aligned}
\hat{a}[t][h] &= k_1a[t-1][h] + k_2a[t-2][h] + k_3a[t-7][h], \\
\hat{b}[t][h] &= k_1b[t-1][h] + k_2b[t-2][h] + k_3b[t-7][h].
\end{aligned}
\end{equation}

Note that $\hat{a}[t][h]$ and $\hat{b}[t][h]$ are just the \textit{predicted} parameters. The true values ($a[t][h]$ and $b[t][h]$) will be known when the hour $h$ comes, and be used to charge customers. 
As the condition for piecewise function is power usage $x$, the access pattern of a na\"ive implementation is \textit{data-dependent}. To hide the access pattern, we use oblivious assembly functions {\small\texttt{O\_greater()}} and {\small\texttt{O\_move()}} \cite{MPML-2016-OSFMNVC,Iron-2016-FVBG}. The function \texttt{Real\_Time\_Pricing} in Fig. \ref{fig:rtp} is executed in the \textit{gateway enclave}, which avoids \texttt{if-else} branches and has no data-dependent operations. Thus it can thwart aforementioned side-channel attacks.

\begin{figure}[h]
	\centering
	\begin{minipage}[c]{0.45\textwidth}
		\centering
		\includegraphics[width=\textwidth]{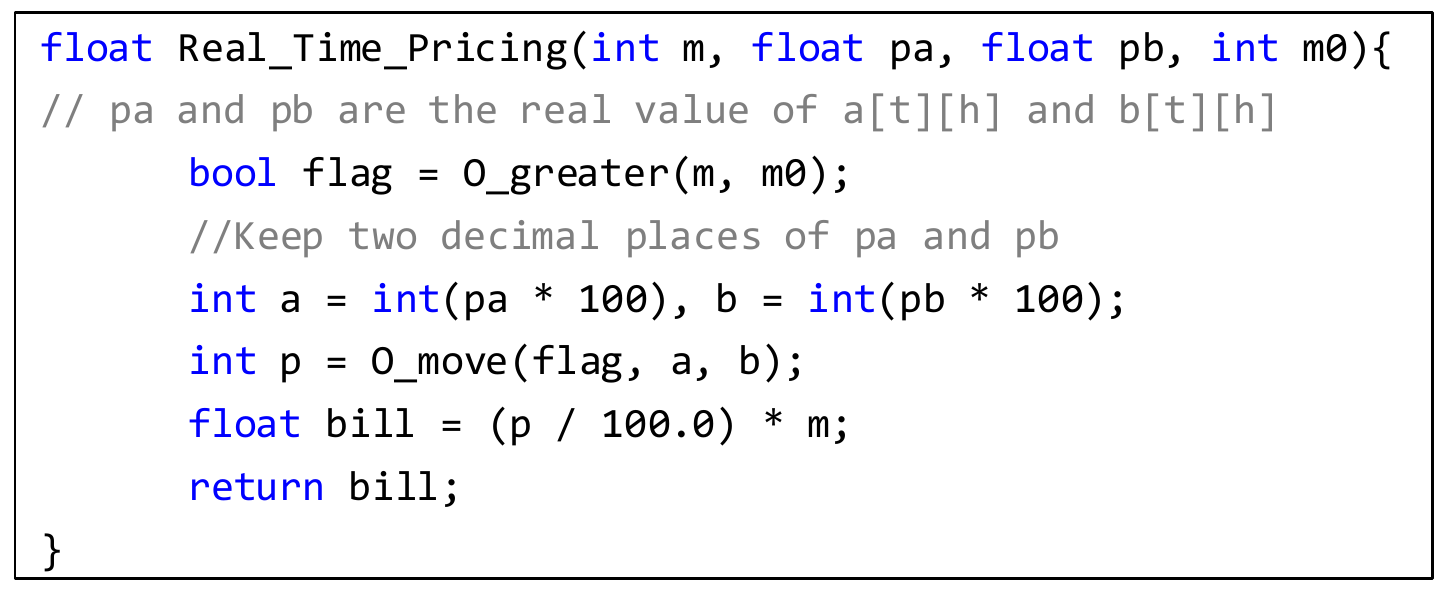}
	\end{minipage}
	\centering
	\begin{minipage}[c]{0.45\textwidth}
		\centering
		\includegraphics[width=\textwidth]{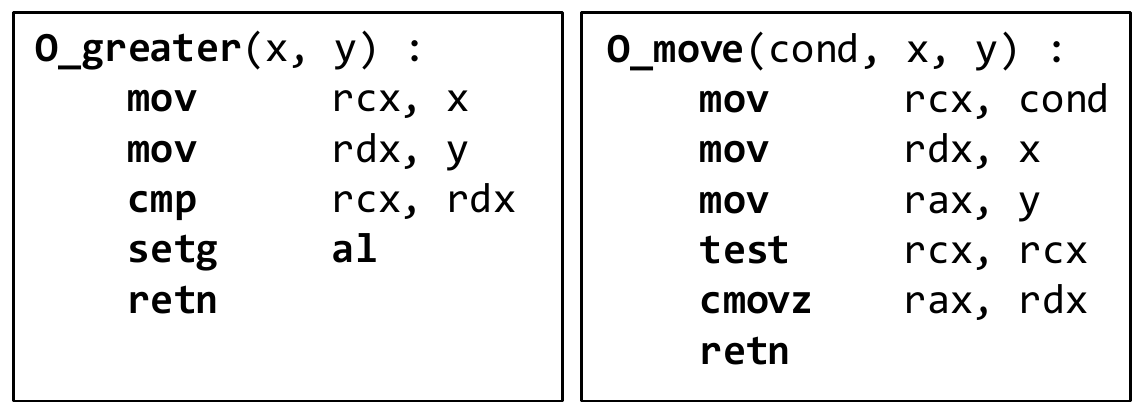}
	\end{minipage}
	\caption{Data oblivious real time pricing function.}
	\label{fig:rtp}
\end{figure}

\subsection{Load Forecasting}\label{par:s8-demand-prediction}
Current commonly used load forecasting models  include statistical based model and artificial intelligence based model \cite{LF-2015-QLHWS,Review-2016-KMSKK}. Next, we describe the secure implementation of stochastic time series method \cite{ANALYSIS-1989-IS}  and neural network based algorithm \cite{LF-2015-QLHWS}.

\smallskip
\noindent\textbf{Stochastic Time Series.} This method uses a fit function to calculate a prediction of next moment (e.g. hourly) load from previous records. An example model is \cite{ANALYSIS-1989-IS}:
\begin{equation}\label{sts}
\begin{aligned}
\mathsf{Load}(t) = &\phi_1\mathsf{Load}(t-1)+\phi_2\mathsf{Load}(t-2)\\
&+\ldots+\phi_k\mathsf{Load}(t-k)+noise(t).
\end{aligned}
\end{equation} 
To realize this model, the cryptography-based methods have to use computationally expensive homomorphic multiplication and addition. While in our system, the GW can compute the predicted load inside the \textit{gateway enclave} with decrypted data directly.


\smallskip
\noindent\textbf{Neural Network} Neural network based algorithms can figure out the relationship between referring variables and power consumption by supervised learning \cite{LF-2015-QLHWS}. Referring variables may include history consumption, day (e.g. holiday), and weather (e.g. temperature). Similarly, these data items are obtained from the \textit{control enclave}. Compared with pure cryptography schemes that need to leverage homomorphic encryption, we can run oblivious machine learning algorithms \cite{MPML-2016-OSFMNVC} on plaintext data in the \textit{gateway enclave}.

\subsection{General Functions}
Besides these three functions we described above, there are many other functions performed by the grid to improve the performance. Many of them require customers' private data as inputs. All these functions can be denoted as $f(\textbf{x}, \textbf{y})$, where $\textbf{x}$ is the privacy-related input, and $\textbf{y}$ is the other input, such as electricity price, time, and weather conditions.

To perform $f(\textbf{x}, \textbf{y})$ in a privacy-preserving way, the enclaves in our system collect $\textbf{x}$ from SMs, and request $\textbf{y}$ from a trust data source via HTTPS \cite{TownCrier-2016-ZCCJS}. For example, the grid administrator can post the latest pricing strategy on a website, where the public can easily verify and the enclaves can obtain the data they need.
From this prospective, our SecGrid system is able to support rich functionalities with privacy protection, and this feature is not available in other smart grid systems.

\section{Security Analysis}\label{par:s11-security}
Our system, \textbf{SecGrid}, aims to protect customers' privacy and execute functions securely against malicious adversaries. Specifically, our system should guarantee (1) confidentiality, (2) integrity, and (3) availability. And we will describe how the various attacks are prevented by our proposed secure protocols and oblivious operations. In addition, we analyze the security for the initialization phases, which are the foundation of our system security.

\subsection{Confidentiality}

\begin{theorem}
	Customer data from the periodical report protocol will never leak outside the enclave. Only the outputs of data aggregation, dynamic pricing, and load forecasting are revealed to the power grid company. 
\end{theorem}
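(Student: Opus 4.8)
The plan is to follow each customer datum $\mathtt{m_i}$ through its three lifecycle phases --- in transit from the SM to the \textit{gateway enclave}, at rest in the GW's untrusted storage, and in active processing inside an enclave --- and to establish confidentiality in each phase using only the security-model assumptions (the adversary can read, block, modify and replay messages and can observe memory-access patterns and control flow, but cannot break AES-GCM or Diffie--Hellman and cannot compromise an enclave or its \textit{Seal Key}/attestation key).

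First, for the in-transit phase: by the Periodic Report Protocol the value $\mathtt{m_i}$ only ever appears inside $\mathtt{r_i}=\mathtt{ID_i}||E_{\mathcal{K}_\mathtt{i}}(\mathtt{ID_i}||\mathtt{m_i}||\mathtt{nonce}||\mathtt{ctr_i})$, encrypted under the symmetric key $\mathcal{K}_\mathtt{i}$. I would invoke the SM Initialization Protocol to show that $\mathcal{K}_\mathtt{i}$ is held only by the SM and the \textit{gateway enclave}: it is freshly generated inside the SM, transmitted wrapped under $\mathcal{K}_\mathsf{init}^\mathtt{i}$ (known only to the SM and the \textit{control enclave}, which voids it after a single use and never releases it outside an enclave) and via the $g^{ab}$-protected, remotely-attested channel to the \textit{gateway enclave}. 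Semantic security of AES-GCM then gives that an adversary reading or replaying reports learns nothing about $\mathtt{m_i}$.

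For data at rest, the protocol seals $\mathtt{r_i}$ --- re-encrypts it under the \textit{Seal Key} derived from the hardware \textit{Root Seal Key} --- before any processing; since the adversary cannot obtain this key, the sealed blobs in OS-controlled storage are opaque (rollback and tampering are integrity, not confidentiality, concerns and are treated separately). For active processing, SGX hides enclave plaintext from all other software including the OS; the one residual channel is memory-access pattern and control flow, which SGX does not hide, so I would appeal to Section~\ref{par:s7-functions}: data aggregation sums values in the public arrival order of reports; the RTP and load-forecasting routines use the oblivious primitives \texttt{O\_greater()} and \texttt{O\_move()} so that no branch target or address depends on $\mathtt{m_i}$; and the ToU/CPP branches depend only on the public time/date supplied by the \textit{control enclave}. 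Hence the enclave's externally observable behavior is a function of public data alone, and the only values that cross the enclave boundary are the sealed blobs (opaque) and the function outputs $\mathsf{PowerUsage}^T_{\mathsf{area}}(t)$, $\mathsf{PricePerUnit}(t)$/$\mathsf{RealTimePricing}(m_h)$, and $\mathsf{Load}(t)$. The identical argument covers the \textit{control enclave}, which by the third step of the Periodic Report Protocol performs the same processing.

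The main obstacle I expect is the active-processing / side-channel part: turning ``observable behavior is a function of public data alone'' into a real argument requires fixing a leakage model (the sequence of page faults, accessed cache lines, and timings that the adversary measures) and then checking, for each of the three concrete implementations, that every branch target and every memory address --- and also every message length and report-arrival time --- is determined solely by public quantities ($t$, the day, the announced prices, the report counter) or by data whose distribution is independent of $\mathtt{m_i}$. For aggregation this additionally needs the observation that, because the summation order is the public arrival order, even the intermediate partial sums carry no adversary-observable correlation with any individual $\mathtt{m_i}$.
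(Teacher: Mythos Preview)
Your proposal is correct and follows essentially the same approach as the paper: the paper splits the argument into ``outside enclave'' (AES-GCM under $\mathcal{K}_\mathtt{i}$, key establishment via the SM initialization protocol, and sealing under the \textit{Seal Key}) and ``inside enclave'' (per-function obliviousness: arrival-order summation for aggregation, public-time branching for ToU/CPP, \texttt{O\_greater}/\texttt{O\_move} for RTP, and oblivious learning for forecasting), which is exactly your transit/at-rest/processing decomposition with the first two phases merged. Your treatment is in fact more careful than the paper's --- you trace the $\mathcal{K}_\mathtt{i}$ chain more explicitly and flag the need for a concrete leakage model --- but the underlying argument is the same.
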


\begin{proof}
	Data confidentiality comes from (a) secure communication protocols for periodic report to hide data on the fly; (b) oblivious operations in \textit{gateway enclave} to hide access pattern.
	
	\smallskip
	\noindent\textbf{(a) Outside enclave} Periodic reports from SM are encrypted with AES-GCM under key $\mathcal{K}_\mathtt{i}$ shared between SM and GW enclave. This  encryption is IND-CCA2 and any \textit{p.p.t} adversary cannot break the confidentiality without $\mathcal{K}_\mathtt{i}$. For key establishment, SM and the \textit{gateway enclave} initialize key $K_i$ using the protocol of Section \ref{SM_init}. SM generates the key and notifies the gateway in $\mathtt{Init}'$ encrypted with $\mathcal{PK}_\mathsf{gw}$. Only the \textit{gateway enclave} can decrypt $\mathtt{Init}'$ and reveal $\mathcal{K}_\mathtt{i}$ with CC's keyring. The sealing of data uses \textit{SealKey} to protect confidentiality.
	
	\smallskip
	\noindent\textbf{(b) Inside enclave} The functions are implemented on the top of oblivious operations of Section \ref{par:s7-functions}. The memory and execution patterns are no longer data-dependent.
	\begin{itemize}[leftmargin=*]
		\item \textit{Data Aggregation}. Reports are summed up according to the order of arrival. The aggregation is independent from the electricity consumption data in SM reports. 
		
		\item \textit{Dynamic Pricing}. ToU and CPP use the piecewise function but the condition is timestamp, which is not private. When RTP is used for leveled price, the condition becomes the usage $x$ which is private. We use oblivious assembly functions \cite{MPML-2016-OSFMNVC,Iron-2016-FVBG} {\small$\mathtt{O\_greater()}$} and {\small$\mathtt{O\_move()}$} to make RTP function oblivious.
		
		\item \textit{Load Forecasting}. Both regression and  neural network systems leverage oblivious learning algorithms \cite{MPML-2016-OSFMNVC}.
	\end{itemize}
	It shows the confidentiality throughout user data lifecycle.
\end{proof}

\subsection{Integrity}

\begin{theorem}
	Integrity and freshness of periodic reports from SM are achieved in the secure communication protocol. 
\end{theorem}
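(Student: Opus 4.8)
The plan is to split the statement into two independent guarantees and establish each: \emph{per-report integrity} --- any tampered or injected ciphertext is rejected --- and \emph{freshness} --- any replayed or rolled-back report is rejected, or else its acceptance already forced the adversary to suppress a legitimate periodic report and hence to raise an alarm. I model the adversary exactly as in the security model (reads, blocks, modifies and replays all traffic to the \textit{gateway enclave}, may roll back sealed state, but does not want to trigger alarms), and I take from Theorem~1 that the AES-GCM key $\mathcal{K}_\mathtt{i}$ shared between $\mathrm{SM}_i$ and the \textit{gateway enclave} never leaves the enclaves.

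\textbf{Step 1: per-report integrity.} Each report is $\mathtt{r_i}=\mathtt{ID_i}\,\|\,E_{\mathcal{K}_\mathtt{i}}(\mathtt{ID_i}\|\mathtt{m_i}\|\mathtt{nonce}\|\mathtt{ctr_i})$ with $E$ instantiated by AES-GCM. I would invoke the ciphertext-integrity (INT-CTXT) property of AES-GCM: a \textit{p.p.t.}\ adversary that makes the enclave accept a ciphertext never produced by $\mathrm{SM}_i$ yields a forgery, contradicting either the unforgeability of GCM or, via Theorem~1, the secrecy of $\mathcal{K}_\mathtt{i}$. Because $\mathtt{ID_i}$ is carried both in the clear and inside the authenticated plaintext (and is bound as associated data when $\mathcal{K}_\mathtt{i}$ is sealed), a cross-meter splice is caught by the same tag check. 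Hence the enclave only ever processes a triple $(\mathtt{m_i},\mathtt{nonce},\mathtt{ctr_i})$ actually emitted by $\mathrm{SM}_i$.

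\textbf{Step 2: replay.} By Step~1 a replayed report is a verbatim copy of an earlier genuine report and therefore carries that report's $\mathtt{nonce}$. The \textit{gateway enclave} draws a fresh uniformly random $\mathtt{nonce}$ each reporting period (the first one being fixed during SM initialization, Section~\ref{SM_init}) and accepts the next report only if the embedded nonce equals the one it just issued; a previously used nonce matches only with probability negligible in the nonce length. So replays across periods fail the nonce check, and a reordering within not-yet-sealed state is ruled out because $\mathtt{ctr_i}$ must equal $\mathtt{ctr^{old}_i}+1$.

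\textbf{Step 3: rollback and reordering --- the main obstacle.} The remaining attack targets the \emph{sealed} state: the adversary swaps the sealed $\mathtt{ctr^{old}_i}$ (and the sealed last report) for an older sealed version with counter $\mathtt{ctr}'<\mathtt{ctr^{old}_i}$. I would argue that, because the SM counter is strictly monotone and reports arrive continuously (one per fixed period), the next genuine report carries $\mathtt{ctr^{old}_i}+1$, whereas after the rollback the enclave expects $\mathtt{ctr}'+1\le\mathtt{ctr^{old}_i}$; the test $\mathtt{ctr_i}=\mathtt{ctr^{old}_i}+1$ then fails and the enclave alarms. Avoiding the alarm would require suppressing every genuine report following the rollback, but a missing periodic report is itself detected --- this is precisely where the ``adversary does not want to trigger an alarm'' assumption is consumed. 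The delicate point I would have to treat carefully is the GW restart protocol (Section~\ref{GW_restart}), which deliberately relaxes the check to $\mathtt{ctr_i}\in\{\mathtt{ctr^{old}_i},\mathtt{ctr^{old}_i}+1\}$ so as not to reject a report the enclave never managed to seal before the crash. I must show this one-step slack cannot be amplified: at each restart the enclave re-requests the \emph{current} report together with a \emph{fresh} nonce from every SM, so the worst an adversary can achieve is to feed the report immediately preceding the crash --- never an arbitrarily old one --- and even that must embed the freshly issued restart nonce, so it is fresh by the argument of Step~2. Concatenating the three steps yields the claim: every report the \textit{gateway enclave} processes is an unmodified, non-stale message genuinely produced by the claimed SM, or an alarm has already been raised.
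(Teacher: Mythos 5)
Your proposal is correct and follows essentially the same route as the paper's own (much terser) proof: existential unforgeability of AES-GCM under the shared key $\mathcal{K}_\mathtt{i}$ gives per-report integrity, and the gateway-chosen $\mathtt{nonce}$ together with the monotonic counter check $\mathtt{ctr_i}=\mathtt{ctr^{old}_i}+1$, backed by the alarm-avoidance assumption, rules out replay and rollback. Your additional treatment of the one-step slack in the GW restart protocol and of sealed-state rollback goes beyond what the paper argues under this theorem (the paper defers sealed-state issues to its next theorem), but it is consistent with and strengthens the same argument.
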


\begin{proof}
	AES-GCM $=(\mathcal{K},\mathcal{E},\mathcal{D})$ provides \textit{existential unforgeability}. Any \textit{p.p.t.} adversary $A$ should fail to forge a ciphertext:
	\[
	\mathsf{Adv}^{\text{exist}}_{A,(\mathcal{K},\mathcal{E},\mathcal{D})}\stackrel{\text{def}}{=}\mathsf{Pr}\left[sk\leftarrow\mathcal{K};y\leftarrow A^{\mathcal{E}_{sk}(\cdot)}:\mathcal{D}_{sk}(y)\neq\perp\right]\leq\epsilon,
	\]
	where $A$ should never receive the ciphertext $y$ in return from the encryption oracle $\mathcal{E}_{sk}(\cdot)$. The advantage $\epsilon$ is negligible.
	
	The existential unforgeability guarantees the message originates from SM. To prevent rollback attacks, we use monotonic GW counter $\mathtt{nonce}$ and SM counters $\mathtt{ctr_i}$ to prevent any steal packet of $\mathtt{ctr^{old}_i}\leq\mathtt{ctr_i}$ to forge a packet for $\mathtt{ctr_i}$. If a steal report is replayed,  \textit{gateway enclaves} can trigger an alarm.
\end{proof}

\begin{theorem}
	Integrity (and freshness) of the external database in gateway enclaves and control enclave is guaranteed. 
\end{theorem}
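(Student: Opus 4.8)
The plan is to split the claim into an \emph{integrity} part (no undetected modification of external storage) and a \emph{freshness} part (no undetected rollback to a stale version), handling the gateway enclave and the control enclave together where their mechanisms coincide and separately where they differ.

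For integrity, I would observe that every object written outside an enclave — the sealed reports, $\mathcal{K}_\mathtt{i}$, $\mathcal{SK}_\mathsf{gw}$, $\mathcal{PK}_\mathsf{gw}$, $\mathcal{PK}_\mathsf{cc}$, $g^{ab}$, and the $\mathcal{K}_\mathsf{init}^\mathtt{i}$ key store — is sealed with the hardware-derived \textit{SealKey} under AES-GCM. By the existential unforgeability of AES-GCM established in the preceding theorem, no \textit{p.p.t.} adversary lacking the key can produce a ciphertext that decrypts to anything other than $\perp$. Hence any bit-flip, truncation, reordering, or substitution of the persisted ciphertexts is caught as an unsealing failure in Step 1 of the GW Restart Protocol (Section \ref{GW_restart}), which raises the alarm; since by the security model the adversary wishes to avoid alarms, no modification can go undetected.

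For freshness of the gateway enclave's state, I would argue that a rollback of the sealed-report store is caught by the counter reconciliation in Step 3 of Section \ref{GW_restart}. Because each $\mathtt{ctr_i}$ is monotone and advanced locally by SM $i$, a live SM always returns a counter at least as large as the one the enclave last legitimately sealed; rolling the store back by $k \ge 2$ periods therefore makes the enclave read a stale $\mathtt{ctr^{old}_i}$ that fails the test $\mathtt{ctr_i} \in \{\mathtt{ctr^{old}_i}, \mathtt{ctr^{old}_i}+1\}$. The $k=1$ case is the only one that passes, but it is operationally identical to the benign ``last report not yet sealed before the crash'' case, so the protocol merely re-fetches an authentic, fresh report and no stale value is ever accepted as current; per-report freshness in transit follows separately from the enclave-chosen $\mathtt{nonce}$ checked in Step 2 of Section \ref{period_report}.

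For freshness of the control enclave I would treat its two kinds of persisted data separately. Its sealed reports inherit the monotone-counter argument above, since the control enclave processes reports identically (Step 3 of Section \ref{period_report}). Its $\mathcal{K}_\mathsf{init}^\mathtt{i}$ key database is protected by the ZeroTrace-style Merkle tree whose root resides in enclave memory: any attempt to alter or roll back a leaf — including ``un-voiding'' a consumed $\mathcal{K}_\mathsf{init}^\mathtt{i}$ — changes the recomputed root and is rejected. The delicate point, and the main obstacle, is a restart of the control enclave, against which a bare Merkle root does not suffice (as noted for enclave attacks); here I would lean on the fact that the control enclave is instantiated once at system setup and its authoritative state is re-established through the CC/GW and SM initialization protocols (Sections \ref{par:systeminit}, \ref{SM_init}) — fresh $g^{ab}$, mutual re-attestation, counter reconciliation with the gateways — so that any stale root accepted on restart is behaviourally indistinguishable from a benign not-yet-persisted state and cannot be leveraged to forge, replay, or rewind any report. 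I expect this corner to absorb most of the effort: the cleanest route is to show that an accepted rolled-back state causes no loss of integrity or freshness downstream, rather than to prevent the rollback outright.
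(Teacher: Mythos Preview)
Your decomposition matches the paper's at the level of mechanisms --- AES-GCM sealing for integrity, the monotone $\mathtt{ctr_i}$ for gateway freshness, and the Merkle tree for the control enclave's $\mathcal{K}_\mathsf{init}^\mathtt{i}$ store --- so the core argument is the same. Two differences are worth flagging. First, the paper's integrity argument adds a point you leave implicit: $\mathtt{ID_i}$ is used as the \emph{associated data} in the AES-GCM sealing, so each sealed record is cryptographically bound to its identity and cannot be swapped across smart meters; your generic ``substitution'' claim needs this binding to be airtight. Second, you detect gateway rollback via Step~3 of the GW Restart Protocol, whereas the paper detects it in the Periodic Report Protocol itself: once the stored counter has been rolled back to some $\mathtt{ctr^{steal}_i}$, the very next legitimate report from SM~$i$ carries a counter strictly greater than $\mathtt{ctr^{steal}_i}+1$ and trips the alarm at the gateway. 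Both routes are valid; yours is cleaner about the restart window, while the paper's covers a rollback injected during live operation without any restart. Finally, the control-enclave-restart corner that you identify as ``the main obstacle'' is simply not addressed in the paper --- it asserts that the in-enclave Merkle root suffices and stops there --- so on that point you are being more careful than the source rather than missing an idea.
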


\begin{proof}
	\textit{Gateway enclaves} seal $K_i$ and $\mathtt{ctr_i}$ with identifier $\mathtt{ID_i}$ as the associated data in AES-GCM. The ciphertext itself is binded with $\mathtt{ID_i}$. We only need to consider rollback attacks. An attack that changes $\mathtt{ctr_i}$ to a steal value $\mathtt{ctr^{steal}_i}$ triggers an alarm to CC because a new packet from $i$-th SM with  $\mathtt{ctr_i}\textgreater\mathtt{ctr^{steal}_i}+1$ indicates the gateway has been rollbacked.
	
	The \textit{control enclave} which stores all SM initialization keys accesses the database with Merkle tree. The root hash is inside the enclave. During SM initialization, key $\mathcal{K}^\mathtt{i}_\mathsf{init}$ is labeled as ``$\mathtt{void}$'' to avoid double registration. This requires freshness which is guaranteed by Merkle tree and \textit{control enclave}. 
\end{proof}

\subsection{Availability}
We provide robustness to increase system availability by the GW restart protocol of Section \ref{GW_restart}. During short-term abrupt failures, the GW can restart the enclave and recollect the data feed from \textit{control enclave}. Lost periodic reports will be retried upon GW  request. If an adversary prevents the gateway from restarting properly, the \textit{gateway enclave} can trigger an alarm. 

\subsection{Security for Initialization}\label{security_init}
The security of our system relies on the initialization phases, including CC/GW initialization and SM initialization. The initialization of CC and GW is actually the setup of the \textit{gateway enclave} and the \textit{control enclave}. The setup phase of enclaves is secure according to our security model. Therefore, the CC/GW initialization cannot be broken by the adversary.

The UD takes an important role in the SM initialization phase, which involves four entities, the SM, the UD, the \textit{gateway enclave} and the \textit{control enclave}. As the communications between enclaves are secure and the SM cannot be compromised by the adversary, the security for the UD is essential for the initialization. Due to the limited resource of the SM, there cannot be a secure channel establish between the SM and the UD. But, the UD's role is to forward the message between the SM and the \textit{gateway enclave}, and prove the legitimacy of the \textit{gateway enclave} for the user. Therefore, even if the UD is compromised, the adversary cannot break the system, because he/she has no decryption key and cannot forge any message.

\section{Performance Evaluation}\label{par:s12-performance}
In this section, we evaluate the performance of our proposed system. We first describe the experimental environment and performance benchmarks. Then, we evaluate the networking and processing overhead of the proposed protocols. Finally, experimental results show that the functions implemented in our system represent a significant performance improvement over the existing cryptography-based solutions.

\subsection{Experimental Setup \& Performance Benchmarks}

\begin{table}[h]
	\caption{Time complexity of operations on SGX.}
	\label{benchmarks}
	\centering
	\begin{tabular}{l r}
		\hline
		\textbf{Operation} & \textbf{Time} \\
		\hline
		Create enclave & 5.6 ms\\
		Sealing (0.1 KB) & 0.015 ms\\
		Unsealing (0.1 KB) & 0.01 ms\\
		Remote Attestation & 39 ms\\
		ECDSA signing (0.1 KB) & 0.69 ms\\
		ECDSA verification (0.1 KB) & 1.21 ms\\
		AES-GCM 128-bit encryption (0.1 KB) & 0.0011 ms\\
		AES-GCM 128-bit decryption (0.1 KB) ~~~~~~~~~~~~~~~~~&0.0017 ms\\ 
		\hline
	\end{tabular}
\end{table}
\textbf{Experimental setup.} 
Our experimental platform runs Windows10 enterprise on Intel Kaby Lake i5-7500@3.40GHz processor, with 8 GB RAM and 128 GB SSD. We developed and compiled our code in Visual Studio 2015 with Intel(R) SGX SDK 1.8 and Intel(R) SGX PSW 1.8. The asymmetric cryptography used in our system for signing is 256-bit ECDSA. We use Diffie-Hellman key exchange to establish shared keys, and 128-bit AES-GCM for symmetric message encryption and authentication.

\textbf{Performance benchmarks.} 
TABLE \ref{benchmarks} provides the time complexity of the basic operations used on SGX. From this table, we can see that symmetric encryption and decryption are very fast (less than 1 microsecond). The most expensive operation is remote attestation. The main reason is that the verifier needs to interact with Intel Attestation Server via network.

\begin{figure}
	\centering
	\includegraphics[width=0.88\linewidth]{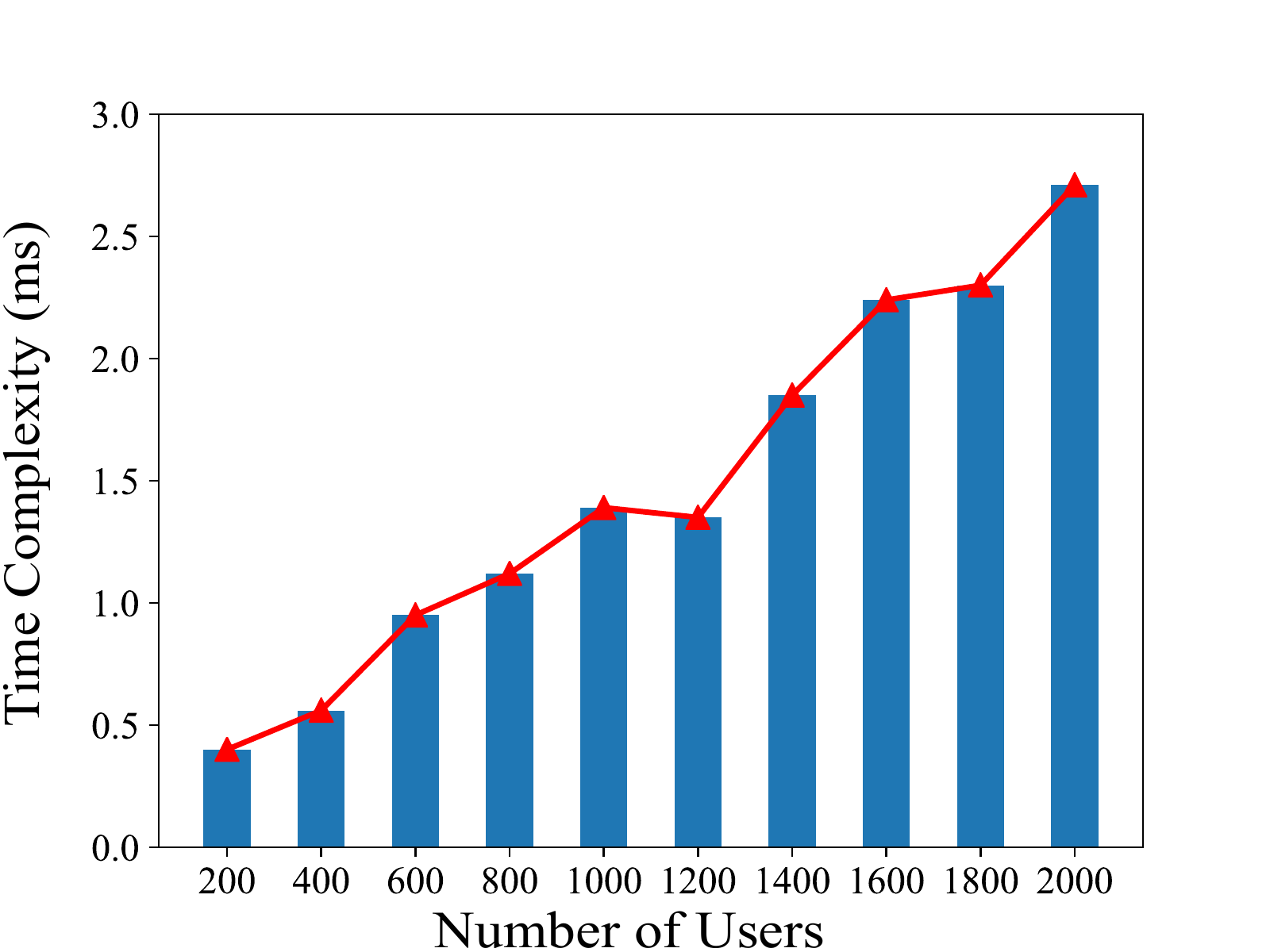}
	\centering\caption{Time complexity of data transmission.}\label{fig:transmit}
\end{figure}

\begin{figure*}[ht]
	\centering
	\begin{minipage}[h]{0.32\linewidth}
		\includegraphics[width=2.5in]{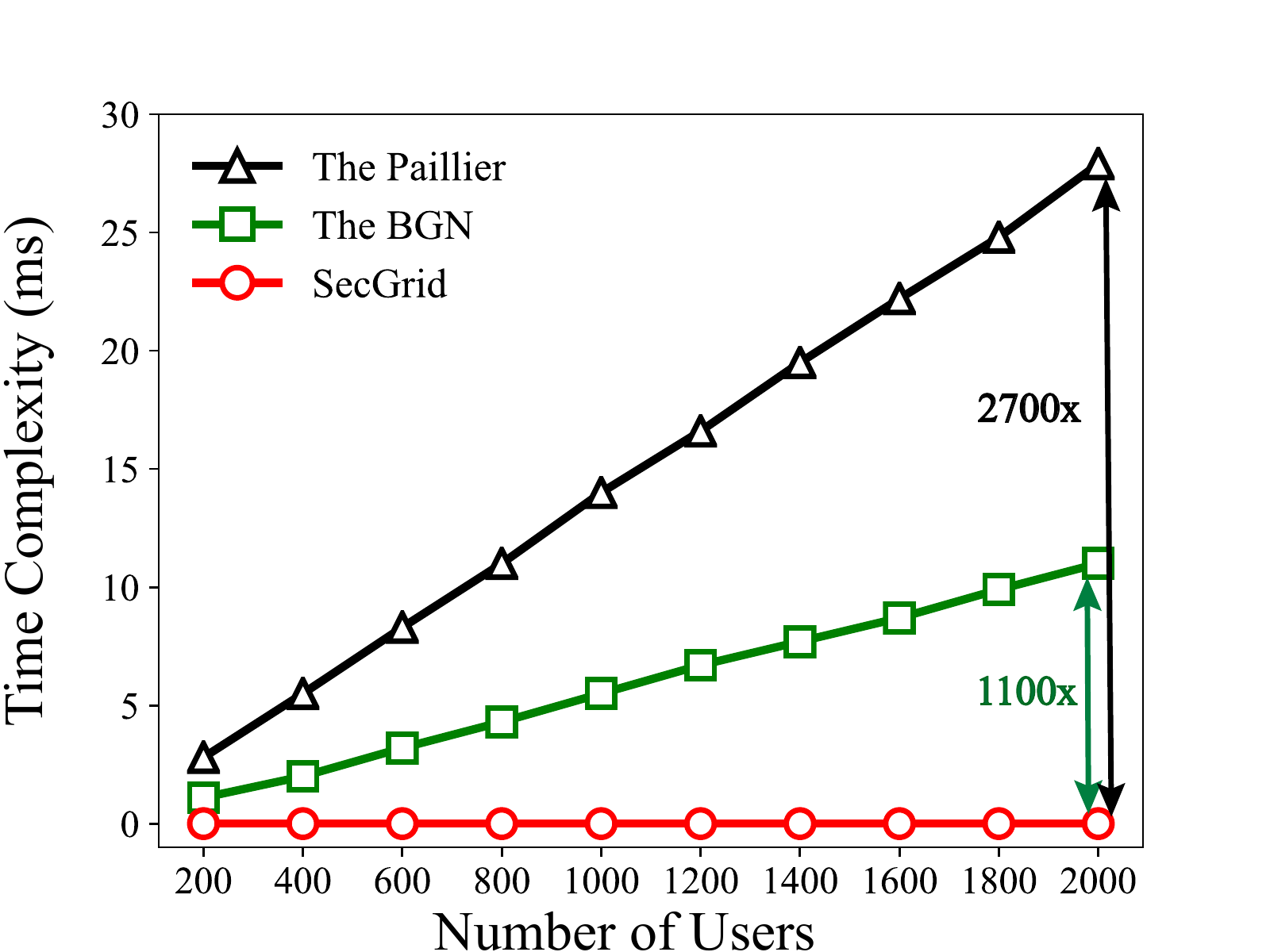}
		\centering\caption{Time complexity of data aggregation.}\label{fig:data_agg}
	\end{minipage}
	\begin{minipage}[h]{0.32\linewidth}
		\centering
		\includegraphics[width=2.5in]{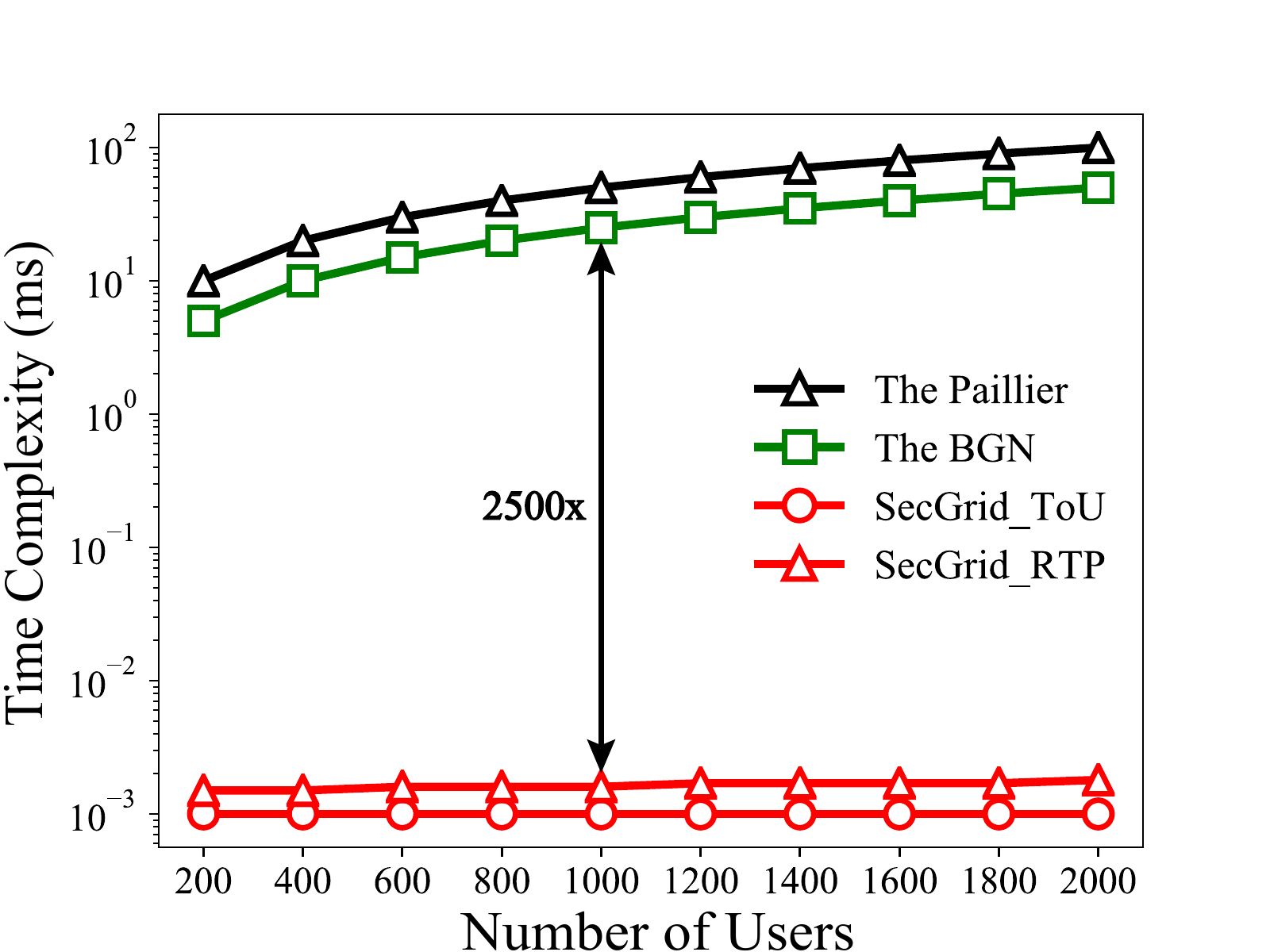}
		\caption{Time complexity of dynamic pricing.}\label{fig:dynamic_pricing}
	\end{minipage}
	\begin{minipage}[h]{0.32\linewidth}
		\centering
		\includegraphics[width=2.5in]{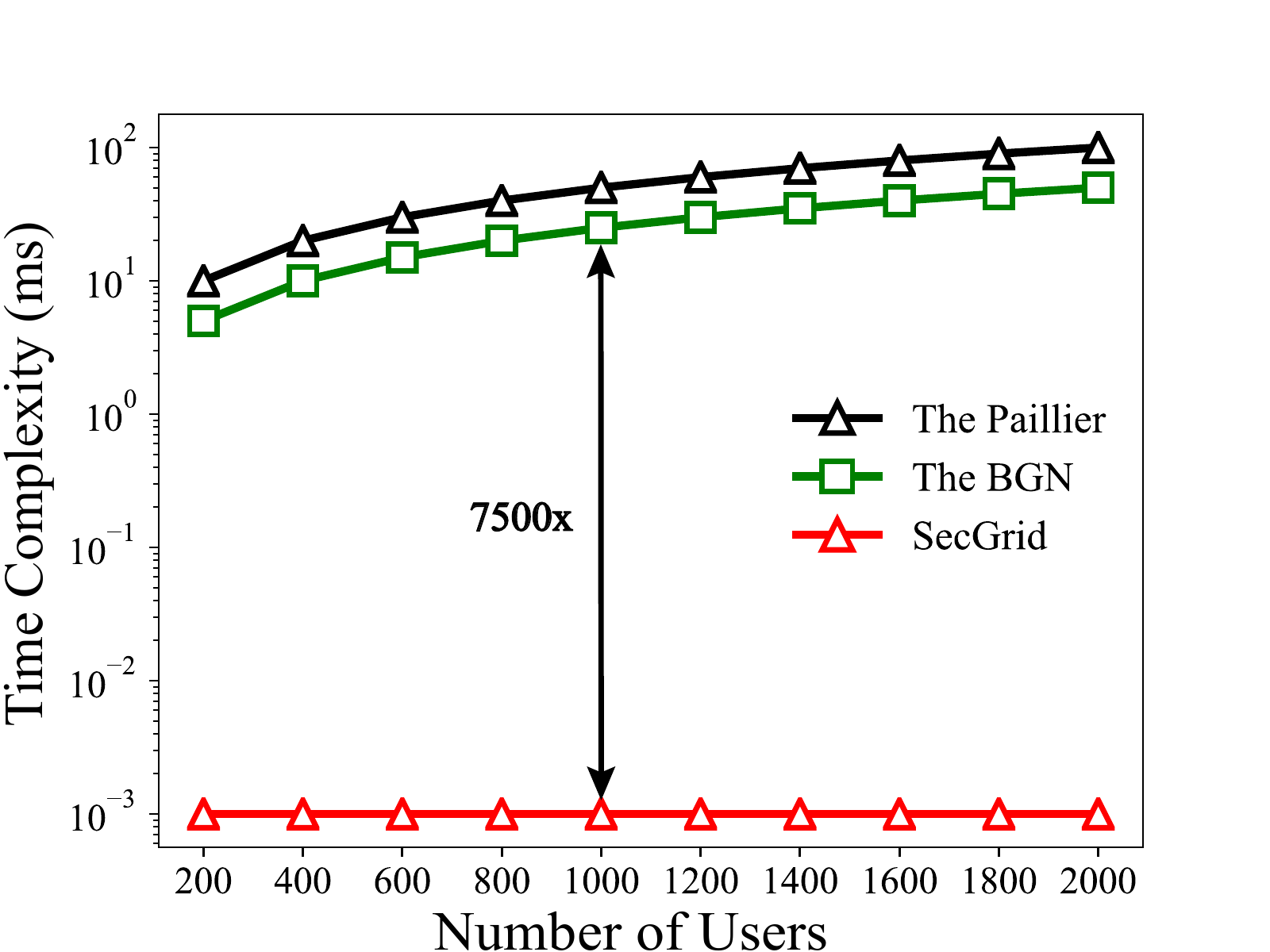}
		\caption{Time complexity of load forecasting.}\label{fig:load_forecasting}
	\end{minipage}
\end{figure*}

%
%
%

\subsection{Performance of Protocols}
The protocol overhead consists of two components: networking and processing. Since the size of network packages in these protocols is relatively small (generally, $<100$ bytes), we only consider the number of communications for the networking overhead. The processing overhead of each protocol contains partial operation as cost shown in TABLE \ref{benchmarks}. The processing and networking overhead of the proposed protocols are shown in TABLE\ref{performance_protocols}.

\begin{table}[h]
	\caption{Performance of Protocols}
	\label{performance_protocols}
	\begin{tabular}{c |c c c c c}
		\hline
		& SM & UD & \tabincell{l}{Gateway\\ Enclave \\Processing} & \tabincell{l}{Control \\ Enclave \\Processing} & Network\\
		\hline\hline
		\tabincell{c}{System\\ Initialization} & -- & -- & 41.8 ms & 40.8 ms & 4 \\
		\hline
		\tabincell{c}{SM\\ Initialization} & $\approx0$ &33.3 ms & 9.5 ms & 1.69 ms & 8 \\
		\hline
		\tabincell{c}{Periodic\\Report} & $\approx0$ & -- & 2.7 ms & 2.1 ms & 4 \\
		\hline
		\tabincell{c}{GW \\ Restart} & $\approx0$ & -- & $3.1\times n$ ms & $<1$ ms & $n+2$\\
		\hline
	\end{tabular}
\end{table}

\textbf{System initialization protocol.} Both \textit{gateway enclave} and \textit{control enclave} need remote attestation, which is the most expensive operation in this protocol (about 39 ms each). The protocol requires key exchange between the two enclaves. We can ignore the computational cost of key generation. The most expensive parts are signing and verifying, which take 0.69 ms and 1.21 ms respectively. Other operations are relatively faster, e.g. sealing, which need less than $0.1$ ms in total.

\textbf{SM initialization protocol.} This protocol involves all four entities in our system. The SM needs to generate a symmetric key and encrypt the $\mathtt{Init}$ message with it. This procedure takes less than $10\mu s$. The UD verifies the remote attestation of the \textit{gateway enclave} and encrypts the $\mathtt{Init'}$ message with $\mathcal{PK}_\mathsf{gw}$, which takes about 32 ms and 0.69 ms respectively. The \textit{gateway enclave} verifies signatures received from the UD and the \textit{control enclave}, decrypts $\mathtt{Init'}$, seals key, and  generates  a signature of $Echo$. All of these operations require only 9.5 ms. The \textit{control enclave} needs one signing and a key access procedure, which totally take about 2.1 ms.

\textbf{Periodic report protocol.} SM encrypts every report before sending it to the \textit{gateway enclave}, which, as we analyzed before, requires almost no cost. The \textit{gateway enclave} needs to decrypt and seal these reports, and encrypts a new report for the \textit{control enclave} as well as a response for the SM. These operations take about 2.1 ms. We do not include the cost of executing functions here, which will be fully evaluated in Section \ref{peformance_functions}. The \textit{control enclave} follows the same steps.

\textbf{GW restart protocol.} Once a GW restarts, the \textit{gateway enclave} needs to restart, too. It will send a request to the \textit{control enclave} to obtain fresh time, and ask each SM to send a new report to avoid losing report without sealing. This procedure requires a \textit{gateway enclave} to communicate $n+2$ times with SM and \textit{control enclave}. It takes 3.1 ms to process a report, and the \textit{gateway enclave} requires $n$ times processing.

\subsection{Performance Analysis of Functions}\label{peformance_functions}
Before the functions being executed, the encrypted data need to transmit into the \textit{gateway enclave} from outside. We test the time complexity of data transmission, the result is shown in Fig. \ref{fig:transmit}. We can see that the time complexity is very low, around 1.5 ms per 1000 users.

We implement the three functions mentioned in our paper, namely, data aggregation, dynamic pricing, and load forecasting. Existing cryptography-based schemes usually use Paillier or BGN cryptosystem to realize homomorphic operations in ciphertext. To compare our implementations with them, we also implement these  three functions with Paillier and BGN cryptosystem, which are denoted as the Paillier and the BGN in the following description. 
Details are described as follows.

\textbf{Data aggregation.} As shown in Fig. \ref{fig:data_agg}, our SecGrid has the lowest cost among all schemes. When the number of users is 2000, the time complexity of ours is 0.011 ms, which is $1100\times$ and $2700\times$ faster than the Paillier and the BGN, respectively.

\textbf{Dynamic pricing.} We implement two dynamic pricing algorithms, ToU and RTP. As for Paillier and BGN, we use fixed-price scheme, i.e., they only need to perform homomorphic multiplication, which in fact has less time complexity than existing dynamic pricing schemes. As shown in Fig. \ref{fig:dynamic_pricing}, the time complexity of our scheme is stable around $1\mu s$, which is $10^5\times$ faster than that of the Paillier and the BGN.

\textbf{Load forecasting.} Our load forecasting model is free from using Paillier and BGN that need to perform homomorphic addition and multiplication. From Fig. \ref{fig:load_forecasting}, we can see that our implementation is $2500\times$ and $7500\times$ faster than the Paillier and the BGN, respectively, and the time complexity of our protocol does not increase as the number of users increases.

\medskip
In summary, the system and SM initialization can be finished within 100 ms and 50 ms, respectively, and the processing of periodic report including three functions can also be completed within around $3\times n$ ms, where $n$ is the number of reports or users. The most expensive operation in this system is remote attestation, followed by ECDSA signing and verifying. Other operations usually take less than 10 ms. Overall, compared to cryptography-based solutions, our SecGrid system has much better performance.

\section{Conclusion}\label{par:s13-conclusion}
In this paper, we presented a practical smart grid system, named SecGrid, to enable rich functionalities without leakage of customers' private data. With our system, smart meters only need to support AES-GCM instead of heavy complex cryptography. Also, the gateway can use rich functionalities to process customers' private data at a high speed with privacy preserving. We proved that our design is sufficiently secure against malicious adversaries. We also implemented SecGrid, and thoroughly evaluate its performance. The experimental results show that our system is very efficient, as the implemented functionalities far outperform existing solutions in terms of time complexity.

%
%

\ifCLASSOPTIONcaptionsoff
  \newpage
\fi

\bibliographystyle{IEEEtran}
\bibliography{IEEEtran}

\begin{thebibliography}{10}
\providecommand{\url}[1]{#1}
\csname url@samestyle\endcsname
\providecommand{\newblock}{\relax}
\providecommand{\bibinfo}[2]{#2}
\providecommand{\BIBentrySTDinterwordspacing}{\spaceskip=0pt\relax}
\providecommand{\BIBentryALTinterwordstretchfactor}{4}
\providecommand{\BIBentryALTinterwordspacing}{\spaceskip=\fontdimen2\font plus
\BIBentryALTinterwordstretchfactor\fontdimen3\font minus
  \fontdimen4\font\relax}
\providecommand{\BIBforeignlanguage}[2]{{%
\expandafter\ifx\csname l@#1\endcsname\relax
\typeout{** WARNING: IEEEtran.bst: No hyphenation pattern has been}%
\typeout{** loaded for the language `#1'. Using the pattern for}%
\typeout{** the default language instead.}%
\else
\language=\csname l@#1\endcsname
\fi
#2}}
\providecommand{\BIBdecl}{\relax}
\BIBdecl

\bibitem{erol2015energy}
M.~Erol-Kantarci and H.~T. Mouftah, ``Energy-efficient information and
  communication infrastructures in the smart grid: A survey on interactions and
  open issues,'' \emph{IEEE Communications Surveys \& Tutorials}, vol.~17,
  no.~1, pp. 179--197, 2015.

\bibitem{Review-2016-KMSKK}
A.~R. Khan, A.~Mahmood, A.~Safdar, Z.~A. Khan, and N.~A. Khan, ``Load
  forecasting, dynamic pricing and {DSM} in smart grid: A review,''
  \emph{Renewable and Sustainable Energy Reviews}, vol.~54, pp. 1311--1322,
  2016.

\bibitem{he2017win}
D.~He, S.~Chan, and M.~Guizani, ``Win-win security approaches for smart grid
  communications networks,'' \emph{IEEE Network}, vol.~31, no.~6, pp. 122--128,
  2017.

\bibitem{LLLLS-UDP-2014}
X.~Liang, X.~Li, R.~Lu, X.~Lin, and X.~Shen, ``{UDP}: Usage-based dynamic
  pricing with privacy preservation for smart grid,'' \emph{IEEE Transactions
  on Smart Grid}, vol.~4, no.~1, pp. 141--150, 2013.

\bibitem{Optimal-2010-RHG}
A.-H. Mohsenian-Rad and A.~Leon-Garcia, ``Optimal residential load control with
  price prediction in real-time electricity pricing environments,'' \emph{IEEE
  transactions on Smart Grid}, vol.~1, no.~2, pp. 120--133, 2010.

\bibitem{ANALYSIS-1989-IS}
I.~Moghram and S.~Rahman, ``Analysis and evaluation of five short-term load
  forecasting techniques,'' \emph{IEEE Transactions on Power Systems}, vol.~4,
  no.~4, pp. 1484--1491, 1989.

\bibitem{raza2015review}
M.~Q. Raza and A.~Khosravi, ``A review on artificial intelligence based load
  demand forecasting techniques for smart grid and buildings,'' \emph{Renewable
  and Sustainable Energy Reviews}, vol.~50, pp. 1352--1372, 2015.

\bibitem{mishra2015rate}
S.~Mishra, X.~Li, A.~Kuhnle, M.~T. Thai, and J.~Seo, ``Rate alteration attacks
  in smart grid,'' in \emph{Proceedings of the 34th IEEE International
  Conference on Computer Communications (INFOCOM)}, 2015, pp. 2353--2361.

\bibitem{liu2016leveraging}
Y.~Liu, S.~Hu, and T.-Y. Ho, ``Leveraging strategic detection techniques for
  smart home pricing cyberattacks,'' \emph{IEEE Transactions on Dependable and
  Secure Computing}, vol.~13, no.~2, pp. 220--235, 2016.

\bibitem{Achieve-2014-zhao}
J.~Zhao, T.~Jung, Y.~Wang, and X.~Li, ``Achieving differential privacy of data
  disclosure in the smart grid,'' in \emph{Proceedings of the 33th IEEE
  International Conference on Computer Communications (INFOCOM)}, 2014, pp.
  504--512.

\bibitem{LXYH-PPMA-2017}
S.~Li, K.~Xue, Q.~Yang, and P.~Hong, ``{PPMA}: Privacy-preserving multi-subset
  aggregation in smart grid,'' \emph{IEEE Transactions on Industrial
  Informatics, Avaliable online}, 2017.

\bibitem{rahman2017secure}
M.~A. Rahman, M.~H. Manshaei, E.~Al-Shaer, and M.~Shehab, ``Secure and private
  data aggregation for energy consumption scheduling in smart grids,''
  \emph{IEEE Transactions on Dependable and Secure Computing}, vol.~14, no.~2,
  pp. 221--234, 2017.

\bibitem{chim2015prga}
T.~W. Chim, S.-M. Yiu, V.~O. Li, L.~C. Hui, and J.~Zhong, ``{PRGA}:
  Privacy-preserving recording \& gateway-assisted authentication of power
  usage information for smart grid,'' \emph{IEEE Transactions on Dependable and
  Secure Computing}, vol.~12, no.~1, pp. 85--97, 2015.

\bibitem{paillier1999public}
P.~Paillier, ``Public-key cryptosystems based on composite degree residuosity
  classes,'' in \emph{Proceedings of the 17th International Conference on the
  Theory and Applications of Cryptographic Techniques (EUROCRYPT)}, 1999, pp.
  223--238.

\bibitem{boneh2005evaluating}
D.~Boneh, E.-J. Goh, and K.~Nissim, ``Evaluating 2-dnf formulas on
  ciphertexts,'' in \emph{Proceedings of the 3rd Theory of Cryptography
  Conference (TCC)}, 2005, pp. 325--342.

\bibitem{victoria2014guidelines}
V.~Y. Pillitteri and T.~L. Brewer, ``Guidelines for smart grid cyber
  security,'' \emph{NIST Interagency/Internal Report (NISTIR)-7628 Rev1}, 2014.

\bibitem{he2017cyber}
D.~He, S.~Chan, and M.~Guizani, ``Cyber security analysis and protection of
  wireless sensor networks for smart grid monitoring,'' \emph{IEEE Wireless
  Communications}, vol.~24, no.~6, pp. 98--103, 2017.

\bibitem{costan2016intel}
V.~Costan and S.~Devadas, ``Intel sgx explained,'' \emph{IACR Cryptology ePrint
  Archive}, 2016.

\bibitem{tan2017survey}
S.~Tan, D.~De, W.-Z. Song, J.~Yang, and S.~K. Das, ``Survey of security
  advances in smart grid: A data driven approach,'' \emph{IEEE Communications
  Surveys \& Tutorials}, vol.~19, no.~1, pp. 397--422, 2017.

\bibitem{ReadMeter-2017-SAW}
K.~Sha, N.~Alatrash, and Z.~Wang, ``A secure and efficient framework to read
  isolated smart grid devices,'' \emph{IEEE Transactions on Smart Grid},
  vol.~8, no.~6, pp. 2519--2531, 2017.

\bibitem{sangho2017inferring}
S.~Lee, M.-W. Shih, P.~Gera, T.~Kim, H.~Kim, and M.~Peinado, ``Inferring
  fine-grained control flow inside {SGX} enclaves with branch shadowing,'' in
  \emph{Proceedings of the 26th USENIX Security Symposium (USENIX Security)},
  2017, pp. 557--574.

\bibitem{T-SGX-2017-SLKP}
M.-W. Shih, S.~Lee, T.~Kim, and M.~Peinado, ``{T-SGX}: Eradicating
  controlled-channel attacks against enclave programs,'' in \emph{Proceedings
  of the 24th Annual Network and Distributed System Security Symposium (NDSS)},
  2017.

\bibitem{ZeroTrace-2017-SGF}
S.~Sasy, S.~Gorbunov, and C.~Fletcher, ``{ZeroTrace}: Oblivious memory
  primitives from intel sgx,'' \emph{Cryptology ePrint Archive}, 2017.

\bibitem{ROTE-2017-MAKDSGJC}
S.~Matetic, M.~Ahmed, K.~Kostiainen, A.~Dhar, and et~al., ``{ROTE}: Rollback
  protection for trusted execution,'' in \emph{Proceedings of the 26th USENIX
  Security Symposium (USENIX Security)}, 2017, pp. 1289--1306.

\bibitem{Ariadne-2016-RF}
R.~Strackx and F.~Piessens, ``Ariadne: A minimal approach to state
  continuity,'' in \emph{Proceedings of 25th USENIX Security Symposium (USENIX
  Security)}, 2016, pp. 875--892.

\bibitem{tushar2017price}
W.~Tushar, C.~Yuen, D.~B. Smith, and H.~V. Poor, ``Price discrimination for
  energy trading in smart grid: A game theoretic approach,'' \emph{IEEE
  Transactions on Smart Grid}, vol.~8, no.~4, pp. 1790--1801, 2017.

\bibitem{ye2016real}
F.~Ye, Y.~Qian, and R.~Q. Hu, ``A real-time information based demand-side
  management system in smart grid,'' \emph{IEEE Transactions on Parallel and
  Distributed Systems}, vol.~27, no.~2, pp. 329--339, 2016.

\bibitem{misra2015d2p}
S.~Misra, S.~Bera, and T.~Ojha, ``{D2P}: Distributed dynamic pricing policyin
  smart grid for phevs management,'' \emph{IEEE Transactions on Parallel and
  Distributed Systems}, vol.~26, no.~3, pp. 702--712, 2015.

\bibitem{ahmad2017accurate}
A.~Ahmad, N.~Javaid, M.~Guizani, N.~Alrajeh, and Z.~A. Khan, ``An accurate and
  fast converging short-term load forecasting model for industrial applications
  in a smart grid,'' \emph{IEEE Transactions on Industrial Informatics},
  vol.~13, no.~5, pp. 2587--2596, 2017.

\bibitem{chan2012load}
S.-C. Chan, K.~M. Tsui, H.~Wu, Y.~Hou, Y.-C. Wu, and F.~F. Wu, ``Load/price
  forecasting and managing demand response for smart grids: Methodologies and
  challenges,'' \emph{IEEE signal processing magazine}, vol.~29, no.~5, pp.
  68--85, 2012.

\bibitem{MPML-2016-OSFMNVC}
O.~Ohrimenko, F.~Schuster, C.~Fournet, A.~Mehta, S.~Nowozin, K.~Vaswani, and
  M.~Costa, ``Oblivious multi-party machine learning on trusted processors,''
  in \emph{Proceedings of the 25th USENIX Security Symposium (USENIX
  Security)}, 2016, pp. 619--636.

\bibitem{VC3-2015-SCFGPMR}
F.~Schuster, M.~Costa, C.~Fournet, C.~Gkantsidis, M.~Peinado, G.~Mainar-Ruiz,
  and M.~Russinovich, ``{VC3}: Trustworthy data analytics in the cloud using
  sgx,'' in \emph{Proceedings of the 36th IEEE Symposium on Security and
  Privacy (SP)}, 2015, pp. 38--54.

\bibitem{Opaque-2017-ZDBPGS}
W.~Zheng, A.~Dave, J.~G. Beekman, R.~A. Popa, and et~al., ``Opaque: An
  oblivious and encrypted distributed analytics platform,'' in
  \emph{Proceedings of the 14th USENIX Symposium on Networked Systems Design
  and Implementation (NSDI)}, 2017, pp. 283--298.

\bibitem{Iron-2016-FVBG}
B.~Fisch, D.~Vinayagamurthy, D.~Boneh, and S.~Gorbunov, ``Iron: functional
  encryption using {Intel SGX},'' in \emph{Proceedings of the 24th ACM SIGSAC
  Conference on Computer and Communications Security (CCS)}, 2017, pp.
  765--782.

\bibitem{TownCrier-2016-ZCCJS}
F.~Zhang, E.~Cecchetti, K.~Croman, A.~Juels, and E.~Shi, ``{Town Crier}: An
  authenticated data feed for smart contracts,'' in \emph{Proceedings of the
  23th ACM SIGSAC Conference on Computer and Communications Security (CCS)},
  2016, pp. 270--282.

\bibitem{LF-2015-QLHWS}
F.~L. Quilumba, W.~J. Lee, H.~Huang, D.~Y. Wang, and R.~L. Szabados, ``Using
  smart meter data to improve the accuracy of intraday load forecasting
  considering customer behavior similarities,'' \emph{IEEE Transactions on
  Smart Grid}, vol.~6, no.~2, pp. 911--918, 2015.

\end{thebibliography}
\balance
\end{document}